\pgfplotsset{compat=1.18}
\newcommand{\set}[1]{\left\{#1\right\}}
\newcommand{\fpr}[1]{\mathopen{}\left(#1\right)}
\newcommand{\fspr}[1]{\mathopen{}\left[#1\right]}
\newcommand{\abs}[1]{{\left|#1\right|}}
\newcommand{\norm}[1]{\left\|#1\right\|}
\newcommand{\ceil}[1]{{\left\lceil#1\right\rceil}}
\newcommand{\real}{\mathbb{R}}
\newcommand{\np}{\textbf{NP}}
\newcommand{\apx}{\textbf{APX}}
\newcommand{\funcdef}[3]{{#1}:{#2} \to {#3}}
\newcommand{\opt}{\mathit{OPT}}
\newcommand{\fm}[1]{{\mathcal{#1}}}
\DeclareRobustCommand{\dispfunc}[2]{%
    \ensuremath{%
        \ifthenelse{\equal{#2}{}}%
            {\mathit{#1}}%
            {\mathit{#1}\fpr{#2}}}}
\newcommand{\bigO}[1]{\dispfunc{\mathcal{O}}{#1}}
\newcommand{\bigOstar}[1]{\dispfunc{\mathcal{O}^*}{#1}}
\newcommand{\prbgts}{\textsc{dgs}\xspace}
\newcommand{\prbgtsc}{\textsc{dgs-centroid}\xspace}
\newcommand{\prbgtss}{\textsc{dgs-sort}\xspace}
\newcommand{\prbgtsp}{\textsc{dgs-partition-ne}\xspace}
\newcommand{\prbgtspalt}{\textsc{dgs-partition}\xspace}
\newcommand{\prbmtc}{\textsc{mtc}\xspace}
\newcommand{\prbfas}{\textsc{fas}\xspace}
\newcommand{\alggreedy}{\textsc{Greedy}\xspace}
\newcommand{\algcut}{\textsc{Mcut}\xspace}
\newcommand{\algdp}{\textsc{TreeDP}\xspace}
\newcommand{\alglpiter}{\textsc{LPiter}\xspace}
\newcommand{\alglpsym}{\textsc{LPsym}\xspace}
\newcommand{\algexact}{\textsc{Exact}\xspace}
\newcommand{\algilpiter}{\textsc{ILPiter}\xspace}
\newcommand{\algkmeans}{\textsc{KMeans}\xspace}
\newcommand{\algkcut}{\textsc{KCut}\xspace}
\newcommand{\algrandom}{\textsc{Random}\xspace}
\newcommand{\dtname}[1]{\textsl{#1}}
\newcommand{\dtenron}{\dtname{Enron}\xspace}
\newcommand{\dtdblp}{\dtname{DBLP}\xspace}
\newcommand{\dtreddit}{\dtname{Reddit\_thread}\xspace}
\newcommand{\dtreddithyperlink}{\dtname{Reddit\_hyperlink}\xspace}
\newcommand{\dttwitter}{\dtname{Twitter}\xspace}
\newcommand{\dtwiki}{\dtname{Wikipedia}\xspace}
\renewcommand{\refname}{References}
\renewcommand\bibsection{%
  \section*{{\refname}\@mkboth{\refname}{\refname}}%
}%
\theoremstyle{thmstyleone}%
\newtheorem{theorem}{Theorem}
\newtheorem{problem}{Problem}
\newtheorem{lemma}{Lemma}
\newtheorem{corollary}{Corollary}
\theoremstyle{thmstyletwo}%
\theoremstyle{thmstylethree}%
\begin{document}

\title{Finding coherent node groups in directed graphs}

\author*[1]{\fnm{Iiro} \sur{Kumpulainen}}\email{iiro.kumpulainen@helsinki.fi}

\author*[1]{\fnm{Nikolaj} \sur{Tatti}}\email{nikolaj.tatti@helsinki.fi}

\affil[1]{\orgname{University of Helsinki, HIIT}, \city{Helsinki}, \country{Finland}}

\abstract{Grouping the nodes of a graph into clusters is a standard technique for studying networks. We study a problem where we are given a directed network and are asked to partition the graph into a sequence of coherent groups. We assume that nodes in the network have features, and we measure the group coherence by comparing these features. Furthermore, we incorporate the cross edges by penalizing the forward cross edges and backward cross edges with different weights. If the weights are set to 0, then the problem is equivalent to clustering. However, if we penalize the backward edges, the order of discovered groups matters, and we can view our problem as a generalization of a classic segmentation problem.

We consider a common iterative approach where we solve the groups given the centroids, and then find the centroids given the groups. We show that---unlike in clustering---the first subproblem is \textbf{NP}-hard. However, we show that we can solve the subproblem exactly if the underlying graph is a tree or if the number of groups is 2. For a general case, we propose an approximation algorithm based on linear programming.

We propose 3 additional heuristics:
(1) optimizing each pair of groups separately while keeping the remaining groups intact, (2) computing a spanning tree and then optimizing using only the edges in that, and (3) a greedy search moving nodes between the groups while optimizing the overall loss. We demonstrate with our experiments that the algorithms are practical and yield interpretable results.}


\maketitle

\section{Introduction}

Summarizing a large graph by grouping the nodes into clusters is a standard technique for studying networks. While many techniques have been proposed for clustering undirected graphs, directed graphs pose additional challenges.

On the other hand, much data can be naturally represented using directed networks, such as discussion threads in social media platforms or a citation graph. In addition to edges we also typically have additional information attached to the nodes, typically expressed as categorical labels or real-valued features. These features allow us to measure the similarity of the nodes, which in turn allows us to cluster similar nodes together.
When clustering nodes, we would like to take edges into account. For example, given a citation graph, our goal is to partition it into groups of similar nodes such that one group cites the other. Another example is a discussion thread where our goal is to group early messages in one group and replies (and the following replies) in the other group.

We consider discovering ordered partitions in a directed graph.
That is, given a graph, our goal is to divide the vertices in a \emph{sequence} of $k$ groups such that each group is as coherent as possible while (backward) cross edges are minimized.

We focus on using $L_2$ loss, though our approach will work on any centroid-based objective. We will refer to this problem as \prbgts. The \np-hardness of clustering immediately implies the hardness of \prbgts. In addition, we prove that \prbgts is \apx-hard by reducing the problem from the feedback arc set problem. 

To solve \prbgts, we first consider an exact solver based on mixed integer linear programming. Unfortunately, this is usable only for small cases, and therefore, we consider several heuristics.

First, we consider a greedy search where we decrease the cost by moving the vertices from one cluster to another. We show that by using a common $L_2$ decomposition we can run a single iteration in $\bigO{k(nd + m)}$ time, where $n$ and $m$ are the numbers of nodes and edges, and $d$ is the number of features.

We also propose an iterative approach where we fix centroids and optimize the partition, and then fix the partition and optimize centroids. We refer to the first problem as \prbgtspalt.
Unlike with the $k$-means algorithm, solving \prbgtspalt is \np-hard.

We then consider two common special cases. We show that if the input graph is a tree, we can solve \prbgtspalt with dynamic programming in $\bigO{dn}$ time. We also show that if $k = 2$, we can find the partition with a minimum cut in $\bigO{n(d + m)}$ time. For a general case, we propose an algorithm that enumerates all pairs of groups and optimizes them using a minimum cut while keeping the remaining groups fixed.

In addition, we propose two linear program approaches for \prbgtspalt: one for the general case, and one for a special case where we penalize backward cross edges and forward cross edges equally. We show that the former yields a $k - 1$ approximation guarantee while the latter yields a $\frac{k + 1}{3}$ approximation guarantee.

A summary of the algorithms and their running times are given in Table~\ref{tab:algo}.

This manuscript is an extension of the previously published work~\citep{kumpulainen2023finding}. Our main extension of the previous work is the introduction of the linear program approaches in Section~\ref{sec:milp} and Section~\ref{sec:lp}, as well as more comprehensive experiments in Section~\ref{sec:exp}. 

The remainder of the paper is organized as follows. We present preliminary notation and define the problem formally in Section~\ref{sec:prel}. We describe the related work in Section~\ref{sec:related}.
We continue by describing an exact MILP solver for \prbgts in Section~\ref{sec:milp}.
Next, we consider an iterative algorithm in Section~\ref{sec:iterative}, show the computational complexity of the main problem and the related sub-problems in Section~\ref{sec:complexity}, discuss the special cases of \prbgtspalt in Section~\ref{sec:sub}, and describe the linear program approaches in Section~\ref{sec:lp}.
Additional algorithms for solving \prbgts are given in Section~\ref{sec:algo}.
We present our experiments in Section~\ref{sec:exp}, and conclude the paper with a discussion in Section~\ref{sec:conclusions}.

\begin{table}

\caption{Summary of the algorithms. Here, $n$ and $m$ are the number of nodes and edges in the input graph, $d$ is the number of features, $k$ is the number of clusters. The functions $\mathit{LP}(X)$, $\mathit{ILP}(X)$, and $\mathit{MILP}(X)$ are the running times of the respective programs with at most $X$ variables and constraints.
The function $\mathit{FAS(k)}$ is the running time needed to solve the feedback arc set problem for a graph with $k$ nodes. This is solved either exactly with ILP in exponential time, or with a heuristic in $\bigO{k^2}$ time. All times are per iteration, except for \algexact.
}
\label{tab:algo}

\begin{tabular}{lrp{4.4cm}}
\toprule
Algorithm & running time & notes \\
\midrule
\algexact & $\mathit{MILP}(\bigO{n^2 + k(m + n)}) + \bigO{n^2d}$ & exact  \\
\alggreedy & $\bigO{k(nd + m)} + \mathit{FAS}(k)$ & iterative \\
\algdp & $\bigO{knd + m} + \mathit{FAS}(k)$ & iterative, solves \prbgtspalt exactly when graph is a tree \\
\algcut & $\bigO{k^2n(d + m)} + \mathit{FAS}(k)$ & iterative, solves \prbgtspalt exactly when $k = 2$ \\
\alglpiter & $\mathit{LP}(\bigO{k(n + m)}) + \bigO{k(nd + m)} + \mathit{FAS}(k)$ & iterative, yields $k - 1$ approximation for \prbgtspalt \\
\alglpsym & $\mathit{LP}(\bigO{k(n + m)})  + \bigO{k(nd + m)}$ & iterative, assumes equal penalties for cross edges, yields $\frac{k + 1}{3}$ approximation for \prbgtspalt \\
\algilpiter & $\mathit{ILP}(\bigO{k(n + m)}) + \bigO{k(nd + m)} + \mathit{FAS}(k)$ & iterative, solves \prbgtspalt with non-empty clusters exactly \\
\bottomrule
\end{tabular}
\end{table}
\section{Preliminary notation and problem definition}\label{sec:prel}
We begin by establishing the notation that we will use throughout the paper and by defining our main optimization problem.

We assume that we are given a directed graph $G=(V,E)$, where $V$ is the set of vertices, and $E$ is the set of edges between vertices. We typically define $n$ to be the number of vertices $\abs{V}$ and $m$ to be the number of edges $\abs{E}$. Assume two disjoint sets of vertices $A$ and $B$. We will write $E(A, B) = \set{e = (v, w) \in E \mid v \in A, w \in B}$ to be the edges from $A$ to $B$.

We assume that we are given a function $\funcdef{H}{2^V}{\real}$ that measures the incoherency of a vertex set.
We solely analyze $L_2$ loss as a measure of incoherence, but the same problem formulation can be interesting for other functions depending on the types of features available. More specifically, assume that we have a map $\funcdef{a}{V}{\real^D}$ that maps a vertex $v$ to a real-valued vector of $D$ features $a(v)$. Then the measure is $L_2(S) = \min_{\mu} \sum_{v \in S} \norm{a(v) - \mu}_2^2$, where $\mu$ is the centroid $\mu = \frac{1}{|S|} \sum_{v \in S} a(v)$.

Our goal is to partition the graph into a sequence of $k$ groups that are simultaneously coherent and minimize cross-edges. To measure the cost of such a partition, we introduce two weight parameters $\lambda_f$ and $\lambda_b$ for the forward and backward edges, respectively. Given an ordered partition $\fm{S} = S_1, \ldots, S_k$, we define a cost function $q$ as
\[
    q(\fm{S} \mid \lambda_f, \lambda_b, H) = \sum_{i = 1}^k H(S_i) + \sum_{j = i+1}^k \lambda_f \abs{E(S_i, S_j)} +  \lambda_b \abs{E(S_j, S_i)}\quad.
\]
We often drop $H$, $\lambda_f$, or $\lambda_b$ from the notation if they are clear from the context. For a set $S_i$ in the partition, we use the terms group and cluster interchangeably.

The definition of $q$ immediately leads to our main optimization problem

\begin{problem}[directed graph segmentation (\prbgts)]
Given a directed graph $G = (V, E)$, integer $k$, two weights $\lambda_f$ and $\lambda_b$,
and a function $\funcdef{H}{2^V}{\real}$, find an ordered $k$-partition $\fm{S} = S_1, \ldots, S_k$ of $V$ such that $q(\fm{S} \mid \lambda_f, \lambda_b, H)$ is minimized.
\end{problem}

From now on we will focus on $H = L_2$ case.
Note that if we set $\lambda_b = \lambda_f$, then the order of sets in $\fm{S}$ does not matter. Moreover, if $\lambda_b = \lambda_f = 0$, then \prbgts reduces to $k$-means clustering. However, our methods can be used with other centroid-based losses.

\section{Related work}\label{sec:related}
Clustering is a staple method in supervised learning with $k$-means problem (see~\citep{duda2000pattern}, for example) being the most common optimization problem. The \np-hardness of clustering, even in the plane~\citep{mahajan2012planar}, makes our problem immediately \np-hard when we set $\lambda_f = \lambda_b = 0$ and $H$ to be $L_2$ loss.

\textbf{Constrained clustering}.
A framework similar to our problem setting is pairwise constrained clustering (PCC), where selected pairs of data points must be in the same cluster or must belong to different clusters~\citep{davidson2005clustering,wagstaff2001constrained}.
Other constraints, such as balancing constraints or minimum-size constraints, have also been studied; we refer the reader to~\citep{basu2008constrained} for more details. The key technical difference is that in PCC, the constraints have no direction. Consequently, the order of the resulting clusters does not matter. However, in our case, if $\lambda_f \neq \lambda_b$ the order of groups matters, especially if we set $\lambda_b = \infty$ and $\lambda_f = 0$.

\textbf{Network clustering}.
We can also view our problem as a directed network clustering problem.
Undirected graph clustering has been well-studied. Popular methods include minimizing modularity~\citep{newman2004finding} as well as stochastic blockmodelling~\citep{abbe2017community}, spectral clustering~\citep{von2007tutorial}, or closely related normalized cuts~\citep{meilua2007clustering}.
We refer the reader to~\citep{fortunato2010community,schaeffer2007graph} for surveys on undirected graph clustering.

The clustering of directed graphs poses additional challenges, as measures need to be adapted. \citet{leicht2008community} proposed a modularity measure for directed graphs. \citet{chung2005laplacians} proposed a Laplacian matrix for directed graphs, allowing the use of spectral clustering. Moreover, a random-walk approach was proposed by~\citet{rosvall2008maps}. We refer the reader to~\citep{malliaros2013clustering} for a survey on the clustering of directed graphs.
The main difference between graph clustering and our problem is that graph clustering methods focus on optimizing measures based solely on edges. In contrast, we use additional features with $L_2$ loss while also minimizing the number of cross edges.

\textbf{Partial graph ordering}.
Our problem is closely related to finding a total preordering (a weak order) with $k$ buckets of the vertices in a directed graph. In the \textsc{Max-$k$-Ordering}\xspace problem, the aim is to assign the vertices of a directed graph to $k$ groups such that the number of forward edges between groups is maximized. \citet{kenkre2017approximability} provide a tight 2-approximation using a linear programming relaxation. This corresponds to a version of our problem where $L_2$ losses are always $0$, $\lambda_b=0$, and $\lambda_f < 0$. 

If instead $\lambda_f = -\lambda_b$ with no $L_2$ terms, we can normalize the costs for forward and backward edges to be $-\frac{1}{2}$ and $-\frac{1}{2}$, respectively. If we then define the number of groups $k$ as a free parameter, the problem can be transformed to a bucket ordering problem discussed by \citet{gionis2006algorithms}, where the aim is to place items in buckets such that the ordering of the buckets maximally agrees with a given pair order matrix.

The key difference between prior graph ordering work and ours is that our problem includes $L_2$ loss, which discourages placing all nodes in the same cluster. In contrast, previous work does this by maximizing the number of forward edges between the groups, while in our case, having all edges within the groups is preferred as $\lambda_f > 0$.

\textbf{Segmentation}.
An interesting special case of our problem occurs when the underlying graph is a directed path, and we set the backward weight to $\lambda_b = \infty$. In this case, the clusters will respect the order of the vertices, and \prbgts reduces to a segmentation problem, in which we are given a \emph{sequence} of points and are asked to segment the sequence into $k$ coherent groups. Segmentation can be solved with dynamic programming in quadratic time~\citep{bellman:61:on} and can be efficiently approximated in quasilinear time~\citep{guha:01:estimate} or linear time~\citep{guha:06:estimate,tatti2019segmentation}.

\textbf{Isotonic regression}.
Finally, let us point out an interesting connection to isotonic regression~\citep{kyng2015fast}.
Assume that the underlying graph is a DAG.
If we set
$\lambda_b = \infty$ and $\lambda_f = 0$, use $L_2$ loss, and additionally require that the $L_2$ norms of the centroids need to be monotonically increasing $\lVert \mu_{i - 1} \rVert_2 < \lVert \mu_i \rVert_2$. Then we can show that the optimization problem can be solved in polynomial time by first applying isotonic regression, ordering the nodes by the obtained mapping, and segmenting the nodes in $k$ segments using dynamic programming~\citep{bellman:61:on}.
\section{Mixed-integer linear program solving directed graph segmentation}
\label{sec:milp}

In this section, we formulate \prbgts as a mixed-integer linear program (MILP) by modifying the MILP formulation for the $k$-means clustering problem by \citet{agoston2024mixed} and introducing additional terms representing the costs for forward and backward edges. We can use this solver directly to solve the segmentation
for the smaller networks. We will also use this program as a base for linear programs that we will introduce later.

Using a well-known identity, we can write the $L_2$ error term as 
\begin{equation*}
\sum_{i = 1}^k \sum_{v \in S_i} \norm{a(v) - \mu_i}_2^2 = \sum_{i = 1}^k \frac{1}{2|S_i|}\sum_{u, v \in S_i} (a(u)-a(v))^2.
\end{equation*}

Our objective is then to minimize
\begin{equation}
\label{eq:main_lp_objective}
\frac{1}{2}\sum_{u, v \in V} (a(u)-a(v))^2 z_{uv} + \sum_{(u,v) \in E} \left( f_{uv}\lambda_f + b_{uv}\lambda_b \right)
\end{equation}
over the variables $z_{uv}$, and $f_{uv}$ and $b_{uv}$ subject to constraints that we will introduce next.

Here, the variables $z_{uv}$ are fractional variables that are $0$ when vertices $u$ and $v$ are in different groups, and when $u$ and $v$ are in the same group, $z_{uv} = 1/C$, where $C$ is the size of that group. Moreover, we will define $b_{uv}$ so that $b_{uv} = 1$ when the edge $(u, v) \in E$ is a backward edge, which means $u$ and $v$ are in different groups $u \in S_i$ and $v \in S_j$ such that $j < i$. Similarly, we will define $f$ so that $f_{uv} = 1$ whenever $(u,v)$ is a forward edge, and $f_{uv} = 0$ otherwise. 

We will divide the constraints into several groups and define several auxiliary variables.
    
1. We start by constraining $z$, 
\begin{subequations}
\label{eq:lp_start}
\begin{align}
\sum_{v \in V} z_{uv} &= 1 & u \in V \\
z_{uv} &\leq z_{uu} & u, v \in V\\
\sum_{u \in V} z_{uu} &= k \\
z_{uv} &\geq 0 & u,v \in V.
\end{align}
\end{subequations}

2. Next we define variables $x_u^i$ to indicate whether the vertex $u$ is assigned to $S_i$,
\begin{subequations}
\label{eq:lp_x_start}
\begin{align}
\sum_{i=1}^k x_u^i &= 1, & u \in V\\
\sum_{u \in V} x_u^i &\geq 1, & i \in \{1,\ldots,k\} \label{eq:lp_x_ne} \\
x_u^i &\in \{0, 1\}. 
\end{align}
\end{subequations}

3. If a vertex $v$ is in a different group $i$ than vertex $u$, then $z_{uv}$ is to be zero,
\begin{align}
z_{uv} & \leq 1 + x_{u}^i - x_{v}^i & u, v \in V, i \in \{1,\ldots,k\}.
\end{align}

4. Our next goal is to constrain $b$ and $f$.
Note that by definition $b_{uv}$ is equivalent to 
\[
    b_{uv} = \sum_{i=1}^k \min(x_v^i, \max(p_v^i - p_u^i, 0)),
\] where the variables $p_{u}^i$ indicate that vertex $u$ belongs to a possibly earlier group $S_j$ with $j \leq i$ (we will define these variables later).
For $f_{uv}$, we similarly have
\[
    f_{uv} = \sum_{i=1}^k \min(x_v^i, \max(q_v^i - q_u^i, 0)),
\] where the variables $q_{u}^i$ indicate that vertex $u$ belongs a possibly later group $S_j$ with $j \geq i$. 

In order to constraint $f$ and $b$, let us now define the variables $p_v^i$ and
an auxiliary variables $\alpha_{uv}^i$ that corresponds to $\max(p_v^i - p_u^i, 0)$,

\begin{subequations}
\label{eq:lp_p_start}
\begin{align}
p_u^i &= \sum_{j=1}^i x_u^j, & u \in V, i \in \{0,\ldots, k\} \label{eq:lp_p}\\
\alpha_{uv}^i &\geq p_v^i - p_u^i, & (u, v) \in E, i \in \{1,\ldots, k\}\\
\alpha_{uv}^i &\geq 0, & (u, v) \in E, i \in \{1,\ldots, k\}.
\end{align} 

We similarly define $q_v^i$ and $\gamma_{uv}^i = \max(q_v^i - q_u^i, 0)$ for the forward edges.

\begin{align}
q_u^i & 
= 1 - p_u^{i-1}, & u \in V, i \in \{1,\ldots, k\}  \label{eq:lp_q}\\
\gamma_{uv}^i &\geq q_v^i - q_u^i, & (u, v) \in E, i \in \{1,\ldots, k\}\\
\gamma_{uv}^i &\geq 0, & (u, v) \in E, i \in \{1,\ldots, k\},
\end{align}
\end{subequations}

5.
Note that $b_{uv} = \sum_i \min(x^i_v, \alpha_{uv}^i)$. To express this, we can introduce additional binary variable sets $s_{uv}^i$ that indicate which of the two terms is smaller and variables $\beta_{uv}^i$ to store the minimum,
\begin{subequations}
\label{eq:lp_d_start}
\begin{align}
x_v^i &\leq \alpha_{uv}^i+s_{uv}^i, & (u, v) \in E, i \in \{1,\ldots, k\}\\
\alpha_{uv}^i &\leq x_v^i + 1 - s_{uv}^i, & (u, v) \in E, i \in \{1,\ldots, k\}\\
\beta_{uv}^i &\geq x_v^i - s_{uv}^i, & (u, v) \in E, i \in \{1,\ldots, k\}\\
\beta_{uv}^i &\geq \alpha_{uv}^i + s_{uv}^i - 1, & (u, v) \in E, i \in \{1,\ldots, k\}\\
s_{uv}^i &\in \{0, 1\}\\ 
b_{uv} &= \sum_{i=1}^k \beta_{uv}^i, & (u, v) \in E.
\end{align}

Similarly, for the forward edges, we define corresponding variables $t_{uv}$ and $\delta_{uv}^i$,
\begin{align}
x_v^i &\leq \gamma_{uv}^i+t_{uv}^i, & (u, v) \in E, i \in \{1,\ldots, k\}\\
\gamma_{uv}^i &\leq x_v^i + 1 - t_{uv}^i, & (u, v) \in E, i \in \{1,\ldots, k\}\\
\delta_{uv}^i &\geq x_v^i - t_{uv}^i, & (u, v) \in E, i \in \{1,\ldots, k\}\\
\delta_{uv}^i &\geq \gamma_{uv}^i + t_{uv}^i - 1, & (u, v) \in E, i \in \{1,\ldots, k\}\\
t_{uv}^i &\in \{0, 1\}\\ 
f_{uv} &= \sum_{i=1}^k \delta_{uv}^i, & (u, v) \in E.
\end{align}
\end{subequations}

The above discussion leads to the following result.

\begin{theorem}
Solving Eq.~\ref{eq:main_lp_objective} subject to Eqs.~\ref{eq:lp_start}--\ref{eq:lp_d_start} solves \prbgts.
The number of variables and constraints in MILP is in $\bigO{n^2 + k(m + n)}$.
\end{theorem}
\section{Iterative approach}\label{sec:iterative}

A standard algorithm to solve the $k$-means problem is to iteratively fix centroids and optimize the partition, and then optimize centroids while keeping the partition fixed.

In order to adapt this idea to our approach, assume that we are given an ordered $k$-partition $\fm{S}$ of vertices $V$ and $k$ centroids. 
Let us define the cost
\[
    q(\fm{S}, \set{\mu_i} \mid \lambda_f, \lambda_b) = \sum_{i = 1}^k \sum_{v \in S_i} \norm{a(v) - \mu_i}_2^2 + \sum_{j = i+1}^k \lambda_f \abs{E(S_i, S_j)} +  \lambda_b \abs{E(S_j, S_i)}\quad.
\]

We will then consider two related sub-problems: in the first, we optimize the partition while keeping the centroid fixed, while in the second we, optimize the centroids while keeping the partition fixed.

\begin{problem}[\prbgtspalt]
Given a directed graph $G = (V, E)$, integer $k$, two weights $\lambda_f$ and $\lambda_b$,
and $k$ centroids $\mu_1, \ldots, \mu_k$, find an ordered $k$-partition $\fm{S} = S_1, \ldots, S_k$ of $V$ such that $q(\fm{S}, \set{\mu_i} \mid \lambda_f, \lambda_b)$ is minimized.
\end{problem}

\begin{problem}[\prbgtsc]
Given a directed graph $G = (V, E)$, integer $k$, two weights $\lambda_f$ and $\lambda_b$, and an ordered $k$-partition $\fm{S} = S_1, \ldots, S_k$ of $V$, find
$k$ centroids $\mu_1, \ldots, \mu_k$ such that $q(\fm{S}, \set{\mu_i} \mid \lambda_f, \lambda_b)$ is minimized.
\end{problem}

However, unlike in standard $k$-means, the order of the clusters is also important when $\lambda_f \neq \lambda_b$. In preliminary experiments, we found that an iterative approach may get stuck with a bad order of centroids. To avoid this, we define a third subproblem where we sort a fixed set of clusters to minimize the number of backward edges between clusters when $\lambda_b > \lambda_f$, or forward edges when $\lambda_f > \lambda_b$.

\begin{problem}[\prbgtss]
Given a directed graph $G = (V, E)$, integer $k$, two weights $\lambda_f$ and $\lambda_b$, and a $k$-partition $\fm{S} = S_1, \ldots, S_k$ of $V$ and the centroids
$\mu_1, \ldots, \mu_k$ sort $\fm{S}$ such that $q(\fm{S}, \set{\mu_i} \mid \lambda_f, \lambda_b)$ is minimized.
\end{problem}

By iteratively solving these three subproblems, we get a heuristic algorithm for \prbgts. The pseudo-code for this approach is given in Algorithm~\ref{alg:iter}.

\begin{algorithm}[t]
\caption{Iterative algorithm}
\label{alg:iter}
$\fm{S} = S_1, \ldots, S_k \gets$ initial partition\;
\While{the loss is decreasing or until a set amount of iterations} {
    compute the centroids $\mu_i = \frac{1}{|S_i|} \sum_{v \in S_i} a(v)$ for $i = 1, \ldots, k$\;
    sort clusters to minimize backward (forward) edges if $\lambda_b > \lambda_f$ ($\lambda_f > \lambda_b$)\;
    optimize $\fm{S}$ minimizing $q(\fm{S}, M \mid \lambda_f, \lambda_b)$ while the centroids $M=\{\mu_i \mid i = 1,\ldots,k\}$ remain fixed\;
}
\Return sets $S_1, \ldots, S_k$\;
\end{algorithm}

Note that it is possible for a solution to \prbgtspalt to contain empty sets, effectively partitioning the vertices into fewer than $k$ groups. To ensure a partition into exactly $k$ sets, we consider a restricted version of the problem, requiring the sets to be non-empty.

\begin{problem}[\prbgtsp]
Given a directed graph $G = (V, E)$, integer $k$, two weights $\lambda_f$ and $\lambda_b$,
and $k$ centroids $\mu_1, \ldots, \mu_k$, find an ordered $k$-partition $\fm{S} = S_1, \ldots, S_k$ of $V$ with no empty sets such that $q(\fm{S}, \set{\mu_i} \mid \lambda_f, \lambda_b)$ is minimized.
\end{problem}

However, as we shall see later in Theorem~\ref{thr:apx_hard} and in Section~\ref{sec:sub}, \prbgtsp is computationally challenging and often results in slower versions of our algorithms. In practice, we focus on solving \prbgtspalt and populate empty clusters between iterations as needed using a greedy heuristic that finds an individual best vertex to move to each empty cluster.

\section{Computational complexity}\label{sec:complexity}
Note that \prbgtsc has an analytical solution, $\mu_i = \frac{1}{\abs{S_i}} \sum_{v \in S_i} a(v)$. However, the partition problems are \np-hard.

\begin{theorem}
\prbgtspalt and \prbgtsp are \np-hard problems even if we fix $\lambda_f, \lambda_b$, and $k$ as long as $\lambda_f + \lambda_b > 0$ and $k \geq 3$.
\end{theorem}

\begin{proof}

The following proof will work directly for both problems. For simplicity, let us focus only on \prbgtspalt.

We will show that the unweighted minimum multiterminal cut problem\footnote{This problem is also known as the multiway cut problem.} (\prbmtc) can be reduced to \prbgtspalt. \prbmtc is
an \np-hard problem~\citep{dahlhaus1994complexity} where we are given an undirected graph and a set of terminals
$T = t_1,\ldots,t_k$, and are asked to partition the vertices in $k$ groups $\fm{C} = C_1, \ldots, C_k$ such that $t_i \in C_i$ and the number of cross-edges is minimized.

Let $G=(V, E)$ be an undirected instance of \prbmtc with $k$ terminals $T = t_1,\ldots,t_k$.
Create an instance of \prbgtspalt as follows:
Set the number of disjoint sets to find as $k$.
Define $k$ centroids $\mu_{1},\ldots,\mu_{k}$ to be orthogonal vectors of length $k$ such that $\mu_i$ has $\lambda_f+\lambda_b$ as the $i$th entry and 0 as all other entries.
Create graph $G'$ as an instance of \prbgtspalt so that it contains all the vertices in $V$ and each undirected edge in $E$ becomes two directed edges, one in each direction.
Additionally, for each terminal $t_i$ create a set $U_i$ of $\abs{U_i} = \abs{V}$ new vertices that are only connected to $t_i$, each with two edges of opposite direction. We set the feature vectors $a(t_i) = \mu_i$ and $a(u) = \mu_i$ for any $u \in U_i$. The remaining vertices $v \in V \setminus T$ have $a(v) = 0$.

Let $S_1, \ldots, S_k$ be the solution for \prbgtspalt.

The cost of including a vertex $u \in U_i$ in $S_i$ is
$\norm{a(u) - \mu_i}^2 = 0$, while the cost of including $u$ in $S_{j}$, for $j \neq i$, is
$\norm{a(u) - \mu_j}^2 = 2(\lambda_f+\lambda_b)$. It is then optimal to include $u \in U_i$ in $S_i$ as the possible loss of $\lambda_f+\lambda_b$ from the edges between the vertex $u$ and $t_i$ is less than the loss of $2(\lambda_f+\lambda_b)$ that we would have if $u$ were in another set. Therefore, an optimal solution will include all the vertices in $U_i$ in $S_i$, which also means that $\set{S_i}$ solves \prbgtsp. Moreover, this means that the terminal $t_i$ will also have to be in $S_i$ as otherwise, the cost from the edges between $t_i$ and the $|U_i| = |V|$ vertices in $U_i$ will be more than any possible loss from the edges between $t_i$ and any other vertices $v \in V$. Thus, $t_i \in S_i$.

Finally, the remaining nonterminal vertices in $V \setminus T$ have the same loss of $\lambda_f+\lambda_b$ due to features, regardless of which set they belong to, so an optimal solution will assign them such that the cost arising from the edges between the sets is minimized.

Given a partition $\fm{S}$, we define a cut $\fm{C}$ for $G$ by setting $C_i = V \cap S_i$. This is a valid cut since $t_i \in S_i$, and since \prbgtspalt minimizes the number of cross-edges, $\fm{C}$ is optimal. 
\end{proof}

\begin{theorem}
\label{thr:apx_hard}
\prbgtsp and \prbgts are \apx-hard problems even if we fix $\lambda_f = 0$, $\lambda_b > 0$, and $k = n$ where $n$ is the number of nodes in the graph.
If the unique game conjecture is true, then there is no constant approximation algorithm for \prbgtsp or \prbgts. If ETH is true, then there is no approximation algorithm for \prbgtsp or \prbgts with $7/6 - \epsilon$ guarantee that runs in $\bigOstar{2^{n^{1 - \epsilon}}}$.
\end{theorem}

\begin{proof}
We will prove the claim with a reduction from Feedback Arc Set (\prbfas). In \prbfas we are given a directed graph and the goal is to order the nodes while minimizing the number of backward edges. Assume that we are given such a graph $G = (V, E)$ with $n$ nodes. We will set no features for the nodes, and set $k = n$. We will also set $\lambda_f = 0$ and $\lambda_b$ to any positive number. Since $k = n$, every node must be in its own cluster when solving \prbgtsp (or \prbgts), inducing an order among the nodes. Moreover, the objective function is equal to the number of backward edges times $\lambda_b$. The \apx-hardness~\citep{kann1992approximability} of \prbfas now proves the result. The remaining claims follow from inapproximability results of \prbfas by~\citet[Corollary 1.2]{guruswami2011beating}, and~\citet[Theorem 5]{bonnet2018sparsification}.
\end{proof}

Next, we show that also \prbgtss is a hard problem.

\begin{theorem}
\prbgtss is an \apx-hard problem even if we fix $\lambda_f = 0$, $\lambda_b > 0$.
If the unique game conjecture is true, then there is no constant approximation algorithm for \prbgtss. If ETH is true, then there is no approximation algorithm for \prbgtss $7/6 - \epsilon$ guarantee that runs in $\bigOstar{2^{n^{1 - \epsilon}}}$.
\end{theorem}

\begin{proof}
By setting $\lambda_f = 0$ and $\lambda_b > 0$, \prbgtss is equivalent to \prbfas. The results follow the hardness results by~\citet{kann1992approximability}, by~\citet[Corollary 1.2]{guruswami2011beating}, and~\citet[Theorem 5]{bonnet2018sparsification}.
\end{proof}

We should point out that despite this result, we are often able to solve \prbgtss exactly. This is done
by contracting each cluster to a single node, and then applying, for example, a weighted \prbfas solver based on integer programming.\!\footnote{\url{https://python.igraph.org/en/stable/}} Since the contracted graph contains only $k$ nodes, running an exponential algorithm is feasible for small values of $k$. For larger values, we resort to a heuristic by~\citet{eades1993fast}.
\section{Special cases for finding the optimal partition exactly in polynomial time}\label{sec:sub}

We showed that solving \prbgtspalt in general is \np-hard.
However, there are two cases where we can solve \prbgtspalt in polynomial time: ($i$) the input graph is a tree or
($ii$) $k = 2$. In this section, we will consider these two cases. The more general case is discussed in the next section.

We should point out that while we focus on $L_2$ loss, this approach works with any loss function as long as it can be decomposed as a sum over the nodes.

\textbf{Case when the input graph is a tree.}
We will first consider a case when the input graph $G$ is a tree.
For simplicity, we will assume that $G$ is also arborescence,
that is, there is a root vertex, say $r$, from which there is a path to each vertex with a \emph{directed} path, but we can extend this approach to trees and forests. 

Given an arborescence $G = (V, E)$ and a vertex $v \in V$, let $G_v$ be the subtree containing $v$ and its descendants. We define $c[v, i]$ to be the cost of the optimal partition $\fm{S}$ of $G_v$ such that $v \in S_i$.
Note that $\min_i c[r, i]$ is equal to the cost of the solution to \prbgtspalt.

In order to compute $c[v, i]$, let us first define 
\[
    \ell[v, i] = \min_{j \leq i} c[v, j] \quad\text{and}\quad
    u[v, i] = \min_{j \geq i} c[v, j],
\]
that is, $\ell[v, i]$ is the cost of the optimal partition $\fm{S}$ of $G_v$ such that $v \in S_j$ for some $j \leq i$. Similarly, $u[v, i]$ is the cost of the optimal partition $\fm{S}$ of $G_v$ such that $v \in S_j$ for some $j \geq i$.
For simplicity, we define $\ell[v, 0] = u[v, k+1] = \infty$.

Next, we compute $c[v, i]$ using only $u$ and $\ell$ of the children of $v$.

\begin{theorem}
Let $c$, $u$, and $v$ be as above. Then for $v \in V$ and $i \in 1, \ldots, k$, 
\begin{equation}
\label{eq:dp}
    c[v, i] = \norm{a(v) - \mu_i}^2 +
    \sum_{w \mid (v, w) \in E} \min(c[w, i], \lambda_b + \ell[w, i-1], \lambda_f + u[w, i+1]).
\end{equation}
\end{theorem}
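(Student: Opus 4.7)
The plan is to establish the recurrence by decomposing the optimal cost on the subtree $G_v$, conditioned on $v\in S_i$, into the contribution of $v$ itself plus independent contributions from each child's subtree. The crucial structural fact is that, because $G$ is an arborescence, the subtrees $G_w$ for children $w$ of $v$ are pairwise vertex-disjoint and their only connection to the rest of $G_v$ is the single tree edge $(v,w)$. Consequently, every edge inside $G_v$ either lies entirely inside some $G_w$ or is a tree edge from $v$ to a child, and every vertex of $G_v$ other than $v$ lies in exactly one $G_w$. This lets the restricted cost function split cleanly as a sum over children.

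With this decomposition, I would minimize child-by-child. Fix a child $w$ and suppose it is assigned to $S_j$. If $j=i$, the edge $(v,w)$ contributes $0$ to the cross-edge term and the best $G_w$-cost is $c[w,i]$ by definition. If $j>i$, the edge $(v,w)$ is forward and contributes $\lambda_f$; among such $j$, the best $G_w$-cost is $\min_{j>i} c[w,j]=u[w,i]$. If $j<i$, symmetrically, the best option contributes $\lambda_b+\ell[w,i]$. Taking the smallest of these three options for each child and summing, together with the vertex term $\norm{a(v)-\mu_i}^2$, matches~\eqref{eq:dp}.

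The main point to justify carefully is the independence of the per-child minimizations, which is precisely where the arborescence hypothesis is doing its work: swapping the chosen sub-partition of one child leaves every other child's contribution, and every cross-edge involving $v$ with a different child, unchanged. This yields the $\leq$ direction by gluing per-child optima with $v\in S_i$ into a feasible partition of $G_v$. The $\geq$ direction is immediate because any partition achieving $c[v,i]$, when restricted to $G_w$, is a feasible witness for one of $c[w,i]$, $u[w,i]$, or $\ell[w,i]$ according to where $w$ lies, so the inner $\min$ is a lower bound on $G_w$'s contribution. I do not anticipate any subtlety with boundary cases (leaves and the extreme indices $i=1$ and $i=k$), which are absorbed by the conventions $\ell[v,1]=u[v,k]=\infty$ and the empty sum for leaves.
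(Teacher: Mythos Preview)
Your proposal is correct and follows essentially the same approach as the paper: decompose the cost on $G_v$ as $v$'s own $L_2$ term plus, for each child $w$, the cost on $G_w$ together with the cross-edge contribution of $(v,w)$, then case-split on whether $w$ lies in $S_i$, in some $S_j$ with $j>i$, or with $j<i$. Your write-up is in fact more careful than the paper's---you make the $\leq$ and $\geq$ directions explicit and you correctly use $u[w,i]$ and $\ell[w,i]$ (the paper's displayed recurrence has a typo writing $u[v,i]$ and $\ell[v,i]$, though its proof uses the child's values as you do).
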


\begin{proof}
Define for notational simplicity $M = \mu_1, \ldots, \mu_k$.
Let $\fm{S}$ be the partition responsible for $c[v, i]$.
For any child $w$ of $v$, let us write $\fm{S}_w$ to be $\fm{S}$ projected
to $G_w$. Let $g(w)$ be the (possibly zero) cost of the possible cross edge $(v, w)$. Since $G_v$ is a tree, we can decompose the cost as
\[
    q(\fm{S}, M) = \norm{a(v) - \mu_i}^2 + \sum_{w \mid (v, w) \in E} g(w) + q(\fm{S}_w, M).
\]
Let $w$ be a child of $v$. We have 3 possible cases. If $w \in S_i$, then $g(w) = 0$ and, due to optimality, $q(\fm{S}_w, M) = c[w, i]$. If  $w \in S_j$ for $j < i$, then
$g(w) = \lambda_b$ and, due to optimality, $q(\fm{S}_w, M) = \ell[w, i-1]$.
Similarly, if  $w \in S_j$ for $j > i$, then
$g(w) = \lambda_f$ and, due to optimality, $q(\fm{S}_w, M) = u[w, i+1]$.
Finally, due to the optimality, the actual case will be the one yielding the smallest cost.
\end{proof}

Computing $c[v, i]$ requires $\bigO{d + \deg v}$ time, where $d$ is the length of the feature vector. Since $\ell[v, i] = \min(\ell[v, i - 1], c[v, i])$  and $u[v, i] = \min(u[v, i + 1], c[v, i])$ we can compute both quantities in constant time. In summary, we can find the optimal cost in $\bigO{knd + km} = \bigO{knd}$ time, since $m = n-1$ in a tree.
To obtain the corresponding partition we store the indices that were responsible for $c[v, i]$ in Eq.~\ref{eq:dp}.

In summary, if the input graph is a tree, we can search for the partition using Algorithm~\ref{alg:iter} and solve the sub-problem \prbgtspalt with dynamic programming. 

\textbf{Requiring non-empty clusters.}
Let us now consider \prbgtsp, that is we require that the clusters should always be non-empty.
In this case we cannot use the same dynamic program. However, a similar approach leads to a program that runs in $\bigO{knd + kn4^k}$ which may be feasible for small numbers of $k$. We leave the case whether \prbgtsp is \np-hard when the input graph is a tree as an open problem.

Similarly to the first case, given an \emph{ordered} arborescence $G = (V, E)$, a vertex $v \in V$ and an integer $s \in [k]$, let $G_{v, s}$ be the subtree containing $v$ and $s$ first children of $v$, and their descendants. Assume a set of indices $Z \subseteq [k]$. We define $c[v, s, i, Z]$ to be the cost of the optimal partition $\fm{S}$ of $G_{v, j}$ such that $v \in S_i$ and the index set of non-empty sets in $\fm{S}$ is a superset of $Z$.
Note that $\min_i c[r \deg(r), i, [k]]$ is equal to the cost of the solution to \prbgtsp.

To compute the dynamic program, we first define the helper tables,
\[
    \ell[v, i, Z] = \min_{j \leq i} c[v, \deg(v), j, Z] \quad\text{and}\quad
    u[v, j, Z] = \min_{j \geq i} c[v, \deg(v), j, Z],
\]

Given a vertex $v$ and its $s$th child $w$, we can compute the entry in the table using the identity,
\[
    c[v, s, i, Z] = \norm{a(v) - \mu_i}^2 +
    \min_{X, Y \mid X \cup Y = Z \setminus \set{i}}
    c[v, s - 1, i, X] + f[w, i, Y],
\]
where
\[
    f[w, i, Y] = \min(c[w, i, Y], \lambda_b + \ell[w, i - 1, Y], \lambda_f + u[w, i + 1, Y]).
\]

Computing $c[v, s, i, Z]$ requires $\bigO{d + 2^k}$ time, where $d$ is the length of the feature vector. Note that $\norm{a(v) - \mu_i}^2$ does not depend on $s$ or $Z$.
In addition, the number of valid $(v, s)$ pairs is in $\bigO{n}$ since $G$ is a tree. Consequently, the running time required to compute $c$ is in $\bigO{knd + kn4^k}$.

\textbf{Case when $k = 2$.} 
Next, we allow the input graph to be any directed graph but we require that $k = 2$. We will argue that we can then solve \prbgtspalt using a weighted minimum directed $s$-$t$ cut. The same approach can be also used to solve \prbgtsp.

To do the mapping assume that we are given a graph $G = (V, E)$ and two centroids $\mu_1$ and $\mu_2$. We define a weighted directed graph $H = (W, A)$
as follows. The vertices $W$ consist of $V$ and two additional vertices $s$ and $t$. For each $v$ we introduce an edge $(s, v)$ to $A$ with a weight $c(s, v) = \norm{v - \mu_2}^2$ and an edge $(v, t)$ with a weight $c(v, t) = \norm{v - \mu_1}^2$. 
For each $(v, w) \in E$, we add an edge $(v, w)$ with a weight $\lambda_f$ and an edge $(w, v)$ with a weight $\lambda_b$.

The next theorem connects the $s$-$t$ cut with the cost of the partition.

\begin{theorem}
\label{thr:mincut}
Let $C_1, C_2$ be the $s$-$t$ cut for $H$. Let $\fm{S} = S_1,S_2$ where $S_i = C_i \cap V$.
Then $q(\fm{S}, \mu_1, \mu_2)$ is equal to the total weight of edges from $C_1$ to $C_2$.
\end{theorem}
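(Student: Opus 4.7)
The plan is to expand the total weight of the $s$-$t$ cut as a sum over the three families of edges introduced in $H$---the source edges $(s,v)$, the sink edges $(v,t)$, and the two copies $(v,w)$ and $(w,v)$ added for each original edge $(v,w) \in E$---and verify that each contribution reproduces exactly one term of $q(\fm{S}, \mu_1, \mu_2)$. Since every edge in $H$ falls into one of these three buckets, once each bucket is matched, the identity follows by simple addition.

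First I would handle the source and sink edges, which together produce the coherency terms of $q$. Since $s \in C_1$ and $t \in C_2$, an edge $(s,v)$ lies in the cut precisely when $v \in C_2 \cap V = S_2$, contributing its weight $\norm{a(v) - \mu_2}^2$; dually, an edge $(v,t)$ lies in the cut precisely when $v \in S_1$, contributing $\norm{a(v) - \mu_1}^2$. Summing over $v \in V$ yields $\sum_{v \in S_1} \norm{a(v) - \mu_1}^2 + \sum_{v \in S_2} \norm{a(v) - \mu_2}^2$, matching the $L_2$ part of the cost.

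Next I would handle the directed edges associated with each $(v,w) \in E$. The forward copy $(v,w) \in A$ with weight $\lambda_f$ is in the cut iff $v \in S_1$ and $w \in S_2$, and summing over $E$ contributes $\lambda_f \abs{E(S_1, S_2)}$. The reverse copy $(w,v) \in A$ with weight $\lambda_b$ is in the cut iff $w \in S_1$ and $v \in S_2$, which means $(v,w)$ is a backward cross edge in $G$; summing contributes $\lambda_b \abs{E(S_2, S_1)}$. Adding these two edge-type contributions to the coherency contribution above reproduces $q(\fm{S}, \mu_1, \mu_2)$ exactly.

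There is no real obstacle here; the construction is tight by design, and the only point that requires care is the pairing of the centroids with the two sides of the cut. Placing weight $\norm{v - \mu_2}^2$ on the source edge and $\norm{v - \mu_1}^2$ on the sink edge is exactly what forces vertices ending up with $s$ (that is, $S_1$) to pay the $\mu_1$ cost and vertices ending up with $t$ (that is, $S_2$) to pay the $\mu_2$ cost, so an easy sanity check on this flip suffices to finish the argument.
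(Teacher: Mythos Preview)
Your argument is correct and follows essentially the same route as the paper: both decompose the cut weight into the source/sink edges (yielding the $L_2$ terms) and the two directed copies of each original edge (yielding the $\lambda_f$ and $\lambda_b$ terms), with the only cosmetic difference being that you start from the cut and match terms of $q$, while the paper starts from $q$ and matches cut edges. Your explicit check of the centroid--side pairing is a nice sanity note that the paper leaves implicit.
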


\begin{proof}
The cost of the partition is equal to
\[
\begin{split}
    q(\fm{S}, \mu_1, \mu_2) = & \sum_{v \in S_1} \norm{v - \mu_1}^2 + \sum_{v \in S_2} \norm{v - \mu_2}^2
    + \sum_{e \in E(S_1, S_2)} \lambda_f
    + \sum_{e \in E(S_2, S_1)} \lambda_b \\
    = & \sum_{v \in S_1} c(v, t) + \sum_{v \in S_2} c(s, v)
    + \sum_{e \in A(S_1, S_2)} c(e).
\end{split}
\]
The sums amount to the total weight of edges from $C_1$ to $C_2$.
\end{proof}

The theorem states that we can solve \prbgtspalt with a minimum cut on $H$.
Finding the minimum cut can be done in $\bigO{nm}$ time~\citep{orlin2013max}, though theoretically slower algorithms, e.g., by~\citet{boykov2004experimental}, are faster in practice. Moreover, there are several theoretically faster algorithms,  e.g., by~\citet{bernstein2024maximum}, but they are randomized, and the probability of failure must be taken into account if they were to be used. Constructing the graph requires $\bigO{nd + m}$ time, where $d$ is the length of the feature vectors. Thus, we can solve \prbgtspalt for $k = 2$ in $\bigO{n(d + m)}$ time.

We can also use a similar approach to solve \prbgtsp. We can do this by selecting two nodes, say $v_1$ and $v_2$ and forcing $v_1 \in S_1$ and $v_2 \in S_2$ by setting the weights of $(s, v_1)$ and $(v_2, t)$ to infinity. In order to solve \prbgtsp we need to test every possible pair for $v_1, v_2$, which leads to a running time of $\bigO{n^3(d + m)}$.
\section{Linear programming approaches for finding an approximately optimal partition}\label{sec:lp}

In this section, we show that by using a randomized rounding algorithm for a linear programming relaxation, we get a $k-1$ approximation for \prbgtspalt. we also show that when $\lambda_f = \lambda_b$, we can improve the approximation factor to $\frac{k+1}{3}$.

Recall the MILP formulation for \prbgts given in Section~\ref{sec:milp}.
We can also express the \prbgtsp as a simpler ILP.
To do that, we define $c_u^i$ to be
the cost of assigning a vertex $u$ to a cluster (centroid) $i$. We can then formulate \prbgtsp as an ILP by removing $z_{uv}$ from the previous program and using $x_v^i$ directly.
This leads to the following ILP:

\begin{align}
\label{eq:lp_objective}
\textsc{Minimize} \quad & \sum_{u \in V} \left( \sum_{i=1}^k x_{u}^i c_{u}^i \right) + 
\sum_{(u,v) \in E} \left( f_{uv}\lambda_f + b_{uv}\lambda_b \right) \\
\textsc{subject to}  \quad & \text{Constraints \ref{eq:lp_x_start} and \ref{eq:lp_p_start}--\ref{eq:lp_d_start}}. \nonumber
\end{align}

Note that Constraint~\ref{eq:lp_x_ne} forces every cluster to be non-empty. Consequently, dropping it will yield an ILP that solves \prbgtspalt.

\subsection{Relaxation and rounding algorithm}

In this section, we introduce an approach based on a linear program for \prbgtspalt.

A common approximation technique is to transform the ILP into a linear program, solve the program, and apply a rounding algorithm. Unfortunately, the variables in Constraints~\ref{eq:lp_d_start} are problematic as they require additional binary variables $s_{uv}^i$ and $t_{uv}^i$.

Instead, we consider an alternative program, where we replace the variables $b$ and $f$ with
\begin{align*}
b'_{uv} &= \sum_{i=1}^k \max(p_v^i - p_u^i, 0), &
f'_{uv} &= \sum_{i=1}^k \max(q_v^i - q_u^i, 0).
\end{align*}

We can express $b'$ and $f'$ without the additional binary variables $s_{uv}^i$ and $t_{uv}^i$, and Constraints~\ref{eq:lp_d_start},
\begin{align}
\label{eq:b'_def}
b'_{uv} &=  \sum_{i=1}^k \alpha_{uv}^i, & (u, v) \in E, \\
\label{eq:f'_def}
f'_{uv} &= \sum_{i=1}^k \gamma_{uv}^i, & (u, v) \in E.
\end{align}
Replacing $b$ and $f$ with $b'$ and $f'$ in the objective function leads to the following integer linear program.

\begin{align}
\label{eq:lp'_objective}
\textsc{Minimize} \quad &
\sum_{u \in V} \left( \sum_{i=1}^k x_{u}^i c_{u}^i \right) +
\sum_{(u,v) \in E} \left( f_{uv}'\lambda_f + b'_{uv}\lambda_b \right) \\
\textsc{subject to} \quad & \text{Constraints~\ref{eq:lp_x_start} (except~\ref{eq:lp_x_ne}), \ref{eq:lp_p_start}, and~\ref{eq:b'_def}--\ref{eq:f'_def}}. \nonumber
\end{align}

Next, consider the linear programming relaxation, allowing the variables $x_u^i$ to take on real values $x_u^i \geq 0$.

\begin{algorithm}[t!]
\caption{LP rounding algorithm}
\label{alg:rounding}
Solve the LP relaxation for Equation~\ref{eq:lp'_objective} to obtain variables $x_v^i$ \;
Randomly sample a variable $\rho$ from the interval $[0, 1]$\;
\lFor{$v \in V$}{
    assign $v$ to the set $S_i$ where $i$ satisfies $\sum_{j=1}^{i-1} x_v^j < \rho \leq \sum_{j=1}^i x_v^j$
}
\Return sets $S_1, \ldots, S_k$\;
\end{algorithm}

\begin{theorem}
\label{thr:lp_approx}
The rounding algorithm in Algorithm~\ref{alg:rounding} provides a $k-1$-approximation for \prbgtspalt in expectation.
\end{theorem}

To prove the theorem, we first need the following lemma.

\begin{lemma}
\label{lem:relax}
Let $\opt$ denote the cost of the solution to \prbgtspalt. Let $\opt'$ be the cost of an optimal ILP solution to the objective in Eq.~\ref{eq:lp'_objective}. Then $\opt' \leq (k - 1)\opt$.
\end{lemma}

\begin{proof}
Let $x_v^i$, $b_{uv}$, $f_{uv}$, $p^i_v$, $q^i_u$, $\alpha^i_{uv}$, and $\gamma^i_{uv}$ be the variables yielding $\opt$.
Define $b'_{uv}$, $f'_{uv}$ using Eqs.~\ref{eq:b'_def}--\ref{eq:f'_def}.

Note that $b'_{uv}$ is the difference in the group indices $\max(j-i, 0)$ when $v$ belongs to an earlier group $v \in S_i$ than $u \in S_j$. Therefore, it is at most $k-1$ when $b_{uv} = 1$, and $0$ otherwise. Thus, we have 
\[
    b'_{uv} \leq (k-1) b_{uv}.
\]
Similarly, for the forward edges, we have 
\[
    f'_{uv} \leq (k-1) f_{uv}.
\]

The variables constitute a feasible solution for the ILP given in Eq.~\ref{eq:lp'_objective}. Immediately,
    $\opt' \leq (k-1) \opt$,
proving the claim.
\end{proof}

\begin{proof}[Proof of Theorem~\ref{thr:lp_approx}]
Let $C$ represent the cost of the output returned by Algorithm~\ref{alg:rounding} and let $\opt$ denote the cost of the solution to \prbgtspalt. Similarly, let $\opt'$ be the cost of the ILP given in Eq.~\ref{eq:lp'_objective}, and let $\opt'_{\mathit{LP}}$ be the cost of the LP relaxation. 

Lemma~\ref{lem:relax} implies that
\[
    \opt'_{LP} \leq \opt' \leq (k - 1)\opt.
\]

Next, fix an edge $(u, v) \in E$ and consider the probability that $(u, v)$ becomes a backward edge after the rounding in Algorithm~\ref{alg:rounding}.
Let $B_i$ be the event where $v$ is assigned in a set $S_i$ while $u$ ends up in some later set, that is, $u \in S_{i'}$ with $i' > i$. The event $B_i$ occurs when
\[
    \sum_{j=1}^{i-1} x_v^j < \rho \leq \sum_{j=1}^i x_v^j
    \quad\text{and}\quad
    \sum_{j=1}^i x_u^j < \rho.
\]

Since $\rho$ is sampled uniformly, the probability of $B_i$ is the length of the intersection of the two corresponding intervals for $\rho$,
\begin{align*}
p(B_i) &= \max\fpr{\sum_{j=1}^i x_v^j - \max\fpr{\sum_{j=1}^{i-1} x_v^j,\  \sum_{j=1}^{i} x_u^j},0} \\
&= \max\fpr{\min\fpr{\sum_{j=1}^i x_v^j - \sum_{j=1}^{i-1} x_v^j,\  \sum_{j=1}^i x_v^j - \sum_{j=1}^{i} x_u^j}, 0}\\
&= \max\fpr{\min\fpr{x_v^i, p_v^i - p_u^i}, 0} \\
&= \min\fpr{\max\fpr{x_v^i, 0}, \max\fpr{p_v^i - p_u^i, 0}} \\
&= \min\fpr{x_v^i, \max\fpr{p_v^i - p_u^i, 0}},
\end{align*}
where $p_v^i$ is given in Eq.~\ref{eq:lp_p}. 

The probability that $(u, v)$ is a backward edge is then 
\begin{align*}
P((u,v) \text{ is a backward edge}) 
&= \sum_{i=1}^k P(B_i) \\
&= \sum_{i=1}^k \min(x_v^i, \max(p_v^i - p_u^i, 0)) \\ 
&\leq \sum_{i=1}^k \max(p_v^i - p_u^i, 0) = b'_{uv}.
\end{align*}

By a similar argument, we can see that the probability of a forward edge is at most $f_{uv}'$. Consider the probability that an edge $(u, v) \in E$ becomes a forward edge after the rounding in Algorithm~\ref{alg:rounding}. 

Let $F_i$ be the event that $v$ is assigned in set $S_i$ while $u$ ends up in some earlier set. The event $F_i$ occurs when
\[\sum_{j=1}^{i-1} x_v^j < \rho \leq \sum_{j=1}^i x_v^j\quad\text{and}\quad \rho \leq \sum_{j=1}^{i-1} x_u^j. \]

The probability of $F_i$ is the length of the corresponding interval for $\rho$,
\begin{align*}
P(F_i) &=\max\fpr{\min\fpr{\sum_{j=1}^i x_v^j, \sum_{j=1}^{i-1} x_u^j}- \sum_{j=1}^{i-1} x_v^j,0} \\
&= \max\fpr{\min\fpr{\sum_{j=1}^i x_v^j - \sum_{j=1}^{i-1} x_v^j, \sum_{j=1}^{i-1} x_u^j - \sum_{j=1}^{i-1} x_v^j}, 0} \\
&= \max\fpr{\min\fpr{x_v^i, p_u^{i-1} - p_v^{i-1}}, 0} \\
&= \max\fpr{\min\fpr{x_v^i, \fpr{1-p_v^{i-1}} - \fpr{1-p_u^{i-1}}}, 0} \\
&= \max\fpr{\min\fpr{x_v^i, q_v^i - q_u^i}, 0} \\
&= \min\fpr{\max\fpr{x_v^i, 0}, \max\fpr{q_v^i - q_u^i, 0}} \\
&= \min\fpr{x_v^i, \max\fpr{q_v^i - q_u^i, 0}},
\end{align*}
where $q_v^i$ is given in Eq.~\ref{eq:lp_q}. 

The probability that $(u, v)$ is a forward edge is then 
\begin{align*}
P((u,v) \text{ is a forward edge}) 
&= \sum_{i=1}^k P(F_i) \\
&= \sum_{i=1}^k \min(x_v^i, \max(q_v^i - q_u^i, 0)) \\ 
&\leq \sum_{i=1}^k \max(q_v^i - q_u^i, 0) = f'_{uv}.
\end{align*}

Finally, note that the probability that a vertex $u$ is assigned to cluster $i$ is \[P(u \text{ is assigned to } S_i) = \sum_{j=1}^i x_u^j - \sum_{j=1}^{i-1} x_u^j = x_u^i. \]

Therefore, the expected costs for the forward and backward edges, as well as the expected costs from the $L_2$ error term in the solution by Algorithm~\ref{alg:rounding}, are at most the corresponding costs for the LP relaxation, since
\begin{align*}
E[C] = & 
\sum_{u \in V} \sum_{i=1}^k P(u \text{ is assigned to } S_i) c_{u}^i  \\
&+ \sum_{e \in E} P(e \text{ is a forward edge}) \lambda_f + P(e \text{ is a backward edge}) \lambda_b  \\
\leq  & \sum_{u \in V} \left( \sum_{i=1}^k x_{u}^i c_{u}^i \right) +
\sum_{(u,v) \in E} \left( f'_{uv}\lambda_f + b'_{uv}\lambda_b \right) = \opt'_{LP}.
\end{align*}

Thus, 
\[
E[C] \leq \opt'_{LP} \leq \opt' \leq (k-1) \opt,
\]
which means that Algorithm~\ref{alg:rounding} yields a $k-1$ approximation in expectation.
\end{proof}

We can derandomize the rounding phase by testing every possible interval for $\rho$.

\begin{corollary}
There is a deterministic algorithm that yields a $k - 1$ approximation for \prbgtspalt that runs in $\bigO{T + k(nd + m)}$ time, where $T$ is the time required to solve LP with $\bigO{k(n + m)}$ variables and constraints.
\end{corollary}

\begin{proof}
Let $D = \set{\sum_{j=1}^{i-1} x_v^j \mid i = 1, \ldots, k, v \in V}$. Then for any $\rho' \in [0, 1]$, there is $\rho \in D$ such that $\rho$ and $\rho'$ yield the same clustering. Therefore, testing every $\rho \in D$ yields an approximation guarantee.

To obtain the running time, we consider each $\rho \in D$ in increasing order.
As we increase $\rho$, assume that a node $v$ changes its cluster. We can update the cost of the new assignment using the old cost in $\bigO{d + \deg v}$ time. As we increase $\rho$, each node can only change its cluster at most $k - 1$ times. This leads to a running time $\bigO{k \sum_{v \in V} d + \deg v} = \bigO{k(nd + m)}$ for the rounding phase.
\end{proof}

\subsection{Linear program approach for a case when $\lambda_f = \lambda_b$}

Next, we consider the symmetric case, where the weights for forward and backward edges are equal. In this case, the ordering of the groups no longer matters. We will show that using a random order for the groups yields a slightly better approximation guarantee of $\frac{k+1}{3}$. 

When $\lambda_f = \lambda_b = \lambda$, the objective for \prbgtspalt in Eq.~\ref{eq:lp_objective} simplifies to 
\begin{equation}
\label{eq:symmetric_lp_objective}
\textsc{Minimize} \quad \sum_{u \in V} \left( \sum_{i=1}^k x_{u}^i c_{u}^i \right) + 
\lambda \sum_{(u,v) \in E} d_{uv}.
\end{equation}
Here $d_{uv} = f_{uv} + b_{uv}$.
Moreover, Constraints~\ref{eq:lp_p_start}--\ref{eq:lp_d_start} can be simplified to
\begin{align*}
y_{uv}^i &\geq x_u^i - x_v^i, & i \in \{1,\ldots, k\}\\
y_{uv}^i &\geq x_u^i - x_v^i, & i \in \{1,\ldots, k\} \\
d_{uv} &= \frac{1}{2} \sum_{i=1}^k y_{uv}^i, & (u,v) \in E,
\end{align*}
where auxiliary variables $y_{uv}^i$ represent the difference $\abs{x_u^i-x_v^i}$ indicating that exactly one of $u$ and $v$ belongs to group $i$

Using ideas for an approximation algorithm for the multiway cut problem presented by~\citet{vazirani2002approximation}, we get the following lemma.

\begin{lemma}
\label{lemma:differ_2_coords}
Given two non-negative vectors $x$ and $y$ of length $k$ and $\norm{x}_1 = \norm{y}_1 = 1$, there is a sequence $Z = (z_1 = x, \ldots, z_k = y)$ of non-negative vectors with $\norm{z_i}_1 = 1$ such that $z_i$ and $z_{i + 1}$ differ in at most two entries and
\[
    \norm{x - y}_1 = \sum_{i} \norm{z_i - z_{i + 1}}_1.
\]
\end{lemma}

Throughout the proof, we will write $x^a$ to be the $a$th entry of a vector $x$.

\begin{proof}
Let $r$ denote the index for which the difference $\abs{x^r - y^r}$ is minimized. Without loss of generality, we may assume $x^r > y^r$, and let $\Delta = x^r - y^r$. Due to minimality of $r$ and the fact that $x$ and $y$ sum to 1, there must be another index $j$ such that $y^j \geq x^j + \Delta$.

Define $z$ to be $x$, except set $z^r = y^r = x^r - \Delta$ and 
$z^j = x^j + \Delta$. It follows that $\norm{z}_1 = 1$ and $\norm{x - y}_1 = \norm{x - z}_1 + \norm{y - z}_1$. Moreover, $x$ and $z$ differ only in two entries. We can repeat the process inductively for $z$ and $y$ to obtain the path $Z$.
The sequence is at most of length $k$ since the number of disagreements between $z$ and $y$ is smaller than the number of disagreements between $x$ and $y$. We can further assume that $Z$ contains $k$ elements by appending it with copies of $y$.
\end{proof}

\begin{algorithm}[t!]
\caption{LP rounding algorithm for the symmetric $\lambda_f = \lambda_b$ case}
\label{alg:rounding2}
Solve the LP relaxation for Equation~\ref{eq:symmetric_lp_objective} to obtain variables $x_v^i$\;
Sample uniformly a random permutation $\sigma : [1,k] \rightarrow [1,k]$\;
Sample uniformly a variable $\rho$ from the interval $[0, 1]$\;
\For{$v \in V$}{
    assign $v$ to the set $S_{\sigma(i)}$ where $i$ satisfies $\sum_{j=1}^{i-1} x_v^{\sigma(j)} < \rho \leq \sum_{j=1}^i x_v^{\sigma(j)}$\;
}
\Return sets $S_1, \ldots, S_k$\;
\end{algorithm}

\begin{theorem}
Algorithm~\ref{alg:rounding2} provides a $\frac{k+1}{3}$-approximation for \prbgtspalt in expectation.
\end{theorem}

\begin{proof}
Note that the probability that a vertex $u$ is assigned to group $i$ is \[P(u \text{ is assigned to } S_i) = \sum_{j=1}^i x_u^j - \sum_{j=1}^{i-1} x_u^j = x_u^i.
\]
Due to the linearity of expectation, the expected cost from the $L_2$ term matches the first term in Equation~\ref{eq:symmetric_lp_objective}.

To prove the claim, we will show that the expected probabilty of $u$ and $v$ ending up in different clusters is bounded by $\frac{k + 1}{3} d_{uv}$. The linearity of expectation then proves the result.

Consider the probability that for an edge $(u,v) \in E$, the vertices $u$ and $v$ are placed in different groups. If $x_u^i = x_v^i$ for all indices $i$, then $u$ and $v$ are always placed in the same group.

Apply Lemma~\ref{lemma:differ_2_coords} to $x_u$ and $x_v$ to obtain a sequence of vectors $Z$.
For notational simplicity, let us write
\[
    c_a(i) = \sum_{j=1}^{i} z_a^{\sigma(j)}
\]
to be the cumulative sum of the shuffled $z_a$.

Define $q_a = s$ to be the index $s$ such that
\[
    c_a(s-1) < \rho \leq c_a(s).
\]
Note that $u$ is assigned to $S_{\sigma(q_1)}$ 
while $v$ is assigned to $S_{\sigma(q_k)}$.

Fix $a < k$. Lemma~\ref{lemma:differ_2_coords} states that $z_a$ and $z_{a + 1}$ differ for exactly two indices, say $\sigma(i)$ and $\sigma(j)$ with $\sigma(i) < \sigma(j)$.

Assume $z_a^{\sigma(i)} < z_{a + 1}^{\sigma(i)}$. It follows that
\[
    c_{a + 1}(s) =
    \begin{cases}
    c_{a}(s) + \Delta, & \text{when } i \leq s \leq j, \\
    c_a(s), & \text{otherwise}, \\
    \end{cases}
\]
where $\Delta = 0.5\norm{z_a - z_{a + 1}}_1$.

Consequently $q_a \geq q_{a + 1}$. Given an index $s$,
the probability (conditioned on the permutation $\sigma$) that $q_a = s$ and $s \neq q_{a + 1}$ is bounded by
\begin{align*}
P(q_a = s, q_{a + 1} < s \mid \sigma) & = P(c_a(s-1) < \rho \leq c_a(s),\  \rho \leq c_{a + 1}(s - 1) \mid \sigma) \\
& \leq P(c_a(s-1) < \rho \leq c_{a + 1}(s - 1) \mid \sigma) \\
&= c_{a + 1}(s - 1)- c_a(s-1). \\
\end{align*}
The bound is $0$, when $s \leq i$ or $s > j$, and $\Delta$ if $i < s \leq j$.
Consequently,
\begin{align*}
P(q_a \neq q_{a + 1} \mid \sigma) & = \sum_{s = {i + 1}}^{j} P(q_a = s, q_{a + 1} < s \mid \sigma) \leq (j - i)\Delta. \\
\end{align*}

Since the indices are uniformly shuffled, $j - i - 1$, the expected number of indices properly between $i$ and $j$ is known to be $\frac{k-2}{3}$, and the probability that $q_a \neq q_{a + 1}$ is at most 
\[
P(q_a \neq q_{a + 1}) = \operatorname{E}_\sigma\fspr{P(q_a \neq q_{a + 1} \mid \sigma)} \leq
\fpr{\frac{k-2}{3} + 1} \Delta = \frac{k+1}{3} \Delta.
\]
The case $z_a^{\sigma(i)} > z_{a + 1}^{\sigma(i)}$ is analogous, completing the proof.
\end{proof}

We can turn the expected approximation guarantee into a high probability statement by performing the rounding process in Algorithm~\ref{alg:rounding2} several times.

\begin{corollary}
Repeating the rounding process $\ceil{- \log(\delta)/ \log (1 + \epsilon)}$ times and selecting the best solution yields a $\frac{k+1}{3}(1 + \epsilon)$ approximation with probability $1 - \delta$ or higher, for any $\epsilon, \delta > 0$.
\end{corollary}

\begin{proof}
Markov's inequality implies that a single rounding process achieves an approximation ratio worse than $\frac{k+1}{3}(1 + \epsilon)$ with a probability of at most $1/(1 + \epsilon)$. Repeating the rounding at least $\ceil{- \log(\delta)/ \log (1 + \epsilon)}$ times reduces the failure probability to $\delta$.
\end{proof}
\section{Additional heuristics for solving directed graph segmentation}\label{sec:algo}

In this section, we consider three additional algorithms \prbgts.
The first two algorithms are based on an iterative approach, that is,
we solve \prbgtspalt on fixed centroids followed by updating and reordering the centroids after each iteration as in Algorithm~\ref{alg:iter}.
The first algorithm utilizes the dynamic programming algorithm for tree graphs on the spanning forest of the input graph. The second algorithm is based on the $k = 2$ case, where we iteratively select pairs $i < j$ and optimize $S_i$ and $S_j$ while keeping everything else fixed. The final third algorithm is a greedy search where we update partitions by moving one node at a time.

\textbf{Dynamic programming on spanning forests.}
While the dynamic programming algorithm in Section~\ref{sec:sub} only works for tree graphs (and forests), we adapt it to general graphs by first computing a maximum-weight spanning forest on the undirected version of the input graph and then applying the dynamic programming algorithm. The pseudocode for this algorithm, called \algdp, is shown in Algorithm~\ref{alg:treedp}. Intuitively, \algdp may perform well if the undirected version of the graph does not contain many cycles and not too many edges are removed to create the spanning forest.

\begin{algorithm}[t!]
\caption{$\algdp(G = (V,E), \text{ initial partition } S_1, \ldots, S_k, \lambda_f, \lambda_b)$,\\ iterative dynamic programming algorithm on a maximum spanning forest}
\label{alg:treedp}
compute the maximum-weight spanning forest of the input graph\;
\While{the loss is decreasing or until a set amount of iterations} {
    solve \prbgtspalt on the spanning forest, and update the centroids\;
    sort clusters to minimize backward (forward) edges if $\lambda_b > \lambda_f$ ($\lambda_f > \lambda_b$)\;
}
\Return sets $S_1, \ldots, S_k$\;
\end{algorithm}

\textbf{Iterative two-group search.}
Our second approach, given
in Algorithm~\ref{alg:mincut}, is based on the special case for $k = 2$.
We iterate over all pairs $1 \leq i < j \leq k$ and for each
pair, we optimize $S_i$ and $S_j$ while keeping the remaining groups fixed
and all the centroids fixed. We will refer to this problem as $\prbgtspalt(i, j)$
Once $S_i$ and $S_j$ are updated, we update
the centroids $\mu_i$ and $\mu_j$.

\begin{algorithm}[t!]
\caption{$\algcut(G = (V,E), \text{ initial partition } S_1, \ldots, S_k, \lambda_f, \lambda_b)$,\\ iterative local search based on a minimum cut}
\label{alg:mincut}
\While{the loss is decreasing or until a set amount of iterations} {
    \ForEach {pair $i$, $j$ with $1 \leq i < j \leq k$} {
        solve $\prbgtspalt(i, j)$, and
        update centroids $\mu_i$ and $\mu_j$\;
    }
    sort clusters to minimize backward (forward) edges if $\lambda_b > \lambda_f$ ($\lambda_f > \lambda_b$)\;
}
\Return sets $S_1, \ldots, S_k$\;
\end{algorithm}

Solving $\prbgtspalt(i, j)$ is almost the same as solving \prbgtspalt for $k = 2$. 
The only main difference is that we need to take into account the cross edges from $S_i$ and $S_j$ to other groups. More formally,
we construct the same graph $H$ as in Section~\ref{sec:sub} except we set
the costs
\[
\begin{split}
    c(s, v) & = \norm{v - \mu_j}^2 + \lambda_f \abs{E(W, v)} + \lambda_b \abs{E(v, W)}, \ \text{and} \\
    c(v, t) & = \norm{v - \mu_i}^2 + \lambda_b \abs{E(W, v)} + \lambda_f \abs{E(v, W)},
    \ \text{where}\  W = S_{i + 1} \cup \cdots \cup S_{j - 1}.
\end{split}
\]

The next result implies that a minimum cut in $H$ solves
$\prbgtspalt(i, j)$.

\begin{theorem}
Let $C_1 = S_i \cup \set{s}$ and $C_2 = S_j \cup \set{t}$. 
Then $q(\fm{S}, \set{\mu_t})$ is equal to the total weight of the edges of $C_1$ to $C_2$ in $H$.
\end{theorem}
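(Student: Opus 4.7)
The plan is to mirror the proof of Theorem~\ref{thr:mincut}, adding bookkeeping for the edges touching the intermediate groups $W = S_{i+1} \cup \cdots \cup S_{j-1}$ and the groups outside $S_i \cup S_j$. I would expand $q(\fm{S}, \set{\mu_t})$ and split its terms into four parts: (a) the $L_2$ cost of $S_i$ and $S_j$; (b) the cross-edge cost between $S_i$ and $S_j$; (c) the cross-edge cost between $S_i \cup S_j$ and $W$; and (d) everything else. The key observation is that (d) is constant: for any $v \in S_i \cup S_j$ and any $u \in S_\ell$ with $\ell < i$ or $\ell > j$, the forward/backward orientation of any edge between $u$ and $v$ is the same whether $v$ is placed in $S_i$ or in $S_j$, so its contribution does not depend on the $S_i$-vs-$S_j$ assignment.

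Parts (a) and (b) are collected by the $C_1$-to-$C_2$ cut exactly as in the $k = 2$ case: $\norm{v - \mu_i}^2$ is absorbed into $c(v,t)$ for $v \in S_i \subseteq C_1$; $\norm{v - \mu_j}^2$ is absorbed into $c(s,v)$ for $v \in S_j \subseteq C_2$; each $G$-edge from $S_i$ to $S_j$ becomes a forward $H$-edge of weight $\lambda_f$ crossing $C_1 \to C_2$, while each $G$-edge from $S_j$ to $S_i$ becomes a reversed $H$-edge of weight $\lambda_b$ that also crosses $C_1 \to C_2$. Part (c) is the new ingredient: for $v \in S_i$, every edge from $v$ into $W$ is forward (weight $\lambda_f$) and every edge from $W$ into $v$ is backward (weight $\lambda_b$), giving a contribution of $\lambda_f \abs{E(v,W)} + \lambda_b \abs{E(W,v)}$, which is exactly the extra term in $c(v,t)$; for $v \in S_j$ the forward/backward roles swap and the matching extras appear in $c(s,v)$.

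The main obstacle is purely the directional bookkeeping---checking that every cross edge involving $S_i \cup S_j$ is counted once, on the correct side, with the correct forward/backward weight, and that part (d) really is invariant under the optimization. Once these verifications are done, summing (a)--(c) reproduces the weight of the $C_1$-to-$C_2$ cut, and the stated equality follows (up to the constant (d), which does not affect solving $\prbgtsp(i,j)$ by minimum cut).
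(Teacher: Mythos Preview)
Your proposal is correct and follows exactly the approach the paper itself indicates (the paper omits the proof, saying only that it is similar to the proof of Theorem~\ref{thr:mincut}). Your observation that the stated equality holds only up to the additive constant from part~(d)---the $L_2$ costs of the groups other than $S_i,S_j$ and the cross-edge costs not touching $S_i \cup S_j$ via $W$---is accurate and worth flagging, since those terms are absent from the cut weight but are invariant under the $\prbgtsp(i,j)$ optimization and hence irrelevant to the minimum-cut argument.
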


The proof is similar to the proof of Theorem~\ref{thr:mincut} and is therefore omitted.

As in the case $k = 2$, solving the minimum cut can be done deterministically in $\bigO{nm}$ time, and constructing $H$ requires $\bigO{nd + m}$ time, where $d$ is the length of the feature vector. Consequently, since there are $k(k - 1)/2$ pairs of $i, j$, a single iteration of the algorithm requires $\bigO{k^2n(d + m)}$ time.

\textbf{Greedy local search.}
As a third algorithm (see Algorithm~\ref{alg:greedy}), we consider a greedy approach where we start with
a partition and try to improve it by moving individual nodes from one group to another until convergence.

\begin{algorithm}[t!]
\caption{$\alggreedy(G = (V,E), \text{ initial partition } S_1, \ldots, S_k, \lambda_f, \lambda_b)$,\\ greedy local search} 
\label{alg:greedy}
\While{the loss is decreasing or until a set amount of iterations} {
    \ForEach{$v \in V$}{
        find optimal $S_j$ for $v$\;
        move $v$ from its current set $S_i$ to $S_j$, and
        update centroids $\mu_i$ and $\mu_j$\;
    }
    sort clusters to minimize backward (forward) edges if $\lambda_b > \lambda_f$ ($\lambda_f > \lambda_b$)\;
}
\Return sets $S_1, \ldots, S_k$\;
\end{algorithm}

Finding the optimal centroid for each vertex is expensive if we were to compute the cost from scratch. Luckily, we can compute the gain more efficiently, which leads to a faster running time.

\begin{theorem}
A single iteration of Algorithm~\ref{alg:greedy} requires $\bigO{k(nd + m)}$ time, where $d$ is the length of the feature vectors.
\label{thr:greedytime}
\end{theorem}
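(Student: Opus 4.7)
The plan is to exhibit an implementation of the inner loop that evaluates and applies each proposed move in time proportional to $k(d + \deg v)$, and then sum over $v$.

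The central data structures I would maintain across the iteration are, for each group $S_i$, the centroid $\mu_i$ together with the size $\abs{S_i}$ (equivalently $\abs{S_i}$ and the vector $\sum_{v \in S_i} a(v)$). Both can be updated in $\bigO{d}$ time whenever a single node is added to or removed from a group, using the standard incremental update for a mean. Initializing these structures at the start of the iteration takes $\bigO{nd}$ time, which is absorbed.

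The first key step is to show that, for a fixed node $v$ currently in $S_a$, we can evaluate the change in the objective caused by moving $v$ to any candidate group $S_j$ in $\bigO{d + \deg v}$ time. The $L_2$ contribution decomposes as a sum over groups, so only the terms for $S_a$ and $S_j$ change; using the standard identity
\[
    L_2(S \cup \set{v}) - L_2(S) = \tfrac{\abs{S}}{\abs{S} + 1}\norm{a(v) - \mu_S}_2^2,
\]
and the analogous formula for removal, the $L_2$ change is computable in $\bigO{d}$ from $\mu_a$, $\mu_j$, $\abs{S_a}$, $\abs{S_j}$ and $a(v)$. For the cross-edge contribution, only edges incident to $v$ can change cost; for each such edge $(v, w)$ or $(w, v)$, knowing the group of $w$ lets us compute the old contribution (based on the pair $(a, \text{group}(w))$) and the new contribution (based on $(j, \text{group}(w))$) in $\bigO{1}$. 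Iterating over all incident edges gives $\bigO{\deg v}$ per candidate $j$.

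The second step is to put these pieces together. For a fixed $v$, we try all $k$ candidate destinations; each candidate costs $\bigO{d + \deg v}$, so choosing the optimal $j$ for $v$ takes $\bigO{k(d + \deg v)}$ time. After deciding, moving $v$ updates at most two centroids, costing $\bigO{d}$, which does not increase the bound. Summing over all $v \in V$ yields
\[
    \sum_{v \in V} \bigO{k(d + \deg v)} = \bigO{knd + k\textstyle\sum_v \deg v} = \bigO{knd + km},
\]
since $\sum_v \deg v = \bigO{m}$ for a directed graph. The main subtlety — the only place where carelessness would cost a factor — is ensuring that the edge-cost change for a single candidate $j$ really is $\bigO{\deg v}$ rather than $\bigO{k \deg v}$ or $\bigO{m}$; this is handled by the observation that only edges incident to $v$ contribute to the delta, so the cost is local to $v$'s neighborhood and we touch each such edge once per candidate $j$.
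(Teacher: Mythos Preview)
Your argument is correct and follows essentially the same route as the paper: maintain centroids incrementally, compute the $L_2$ change for a single move in $\bigO{d}$ via a standard $L_2$ decomposition, handle the edge-cost change in $\bigO{\deg v}$ by touching only edges incident to $v$, and sum over all $k$ candidates and all vertices. The only cosmetic difference is that the paper expresses the $L_2$ difference via the identity $\sum_{w \in S_t}\norm{a(w)-\mu_t}^2 = \sum_{w \in S_t}\norm{a(w)}^2 - \abs{S_t}\norm{\mu_t}^2$ rather than the incremental form $\tfrac{\abs{S}}{\abs{S}+1}\norm{a(v)-\mu_S}^2$ you use, but these are equivalent bookkeeping choices.
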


For each vertex $v$, we compute the change in $L_2$ error when moving $v$ from one group to another. We can do this using the following lemma.

\begin{lemma}
\label{lem:diff}
Let $\fm{S}$ be a partition. Select $i$ and $v \in S_i$.
Select $j$ and let $\fm{S}'$ be the result of moving $v$ from $S_i$ to $S_j$.
Let $\set{\mu_t}$ and $\set{\mu_t'}$ be the corresponding optimal centroids.
Let $H(S_i)$ be the $L_2$ error of $S_i$ and write 
$H(\fm{S}) = \sum H(S_i)$. Then
\[
    H(\fm{S}') - H(\fm{S}) = \abs{S_i}\norm{\mu_i}^2 + \abs{S_j}\norm{\mu_j}^2
    - \abs{S_i'}\norm{\mu_i'}^2 - \abs{S_j'}\norm{\mu_j'}^2 
\]
\end{lemma}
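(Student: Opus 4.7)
The plan is to exploit the standard bias--variance identity for the $L_2$ error: for any nonempty set $S \subseteq V$ with optimal centroid $\mu = \frac{1}{\abs{S}}\sum_{v \in S} a(v)$, expanding the square yields
\begin{equation*}
    H(S) \;=\; \sum_{v \in S} \norm{a(v) - \mu}^2 \;=\; \sum_{v \in S} \norm{a(v)}^2 \;-\; \abs{S}\norm{\mu}^2,
\end{equation*}
since the cross term $-2\mu^T \sum_{v \in S} a(v)$ equals $-2\abs{S}\norm{\mu}^2$ by definition of $\mu$. This identity splits $H(S)$ into a part depending only on the raw feature vectors and a part depending only on the centroid and the cardinality, and it is essentially all the arithmetic the lemma requires.

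Next, I would observe that since $\fm{S}'$ is obtained from $\fm{S}$ by moving a single vertex $v$ from $S_i$ to $S_j$, the partitions agree on every other set, and the corresponding optimal centroids $\mu_t$ and $\mu_t'$ coincide for $t \notin \set{i, j}$. Therefore
\begin{equation*}
    H(\fm{S}') - H(\fm{S}) \;=\; H(S_i') + H(S_j') - H(S_i) - H(S_j).
\end{equation*}
Applying the identity from the previous paragraph to each of these four sets gives
\begin{equation*}
\begin{split}
    H(\fm{S}') - H(\fm{S}) \;=\;& \Bigl(\textstyle\sum_{u \in S_i'} \norm{a(u)}^2 + \sum_{u \in S_j'} \norm{a(u)}^2\Bigr) \\
    & \;-\; \Bigl(\textstyle\sum_{u \in S_i} \norm{a(u)}^2 + \sum_{u \in S_j} \norm{a(u)}^2\Bigr) \\
    & \;+\; \abs{S_i}\norm{\mu_i}^2 + \abs{S_j}\norm{\mu_j}^2 - \abs{S_i'}\norm{\mu_i'}^2 - \abs{S_j'}\norm{\mu_j'}^2.
\end{split}
\end{equation*}

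The crucial bookkeeping step is then to note that moving a single vertex preserves the multiset union $S_i \cup S_j = S_i' \cup S_j'$, so the two bracketed sums of squared feature norms are equal and cancel. What remains is precisely the claimed expression. There is no real obstacle here; the argument is entirely mechanical once one has the bias--variance identity and the observation that the untouched groups do not contribute to the difference.
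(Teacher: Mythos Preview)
Your proof is correct and follows exactly the paper's approach: the paper invokes the same identity $\sum_{w \in S_t} \norm{a(w) - \mu_t}^2 = \sum_{w \in S_t} \norm{a(w)}^2 - \abs{S_t}\norm{\mu_t}^2$ and states that it ``immediately proves the claim.'' You have simply spelled out the cancellation of the raw feature-norm sums and the fact that only the $i$th and $j$th groups change, which the paper leaves implicit.
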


\begin{proof}
The identity
    $\sum_{w \in S_t} \norm{a(w) - \mu_t}^2 = (\sum_{w \in S_t} \norm{a(w)}^2) - \abs{S_t}\norm{\mu_t}^2$
immediately proves the claim.
\end{proof}

\begin{proof}[Proof of Theorem~\ref{thr:greedytime}]
Computing the gain of moving $v$ from $S_i$ to $S_j$ requires computing $\mu_i'$ and $\mu_j'$ which can be done in $\bigO{d}$ time using $\mu_i$ and $\mu_j$. Using Lemma~\ref{lem:diff} for the $L_2$ error and computing the changes in the cost terms for the edges in \bigO{\deg(v)} time allows us to compute the total cost difference in $\bigO{d + \deg(v)}$ time. Summing over $v$ and $j$ leads to a running time of $\bigO{k(nd + m)}$.
\end{proof}
\section{Experimental evaluation}\label{sec:exp}
In this section, we describe our experiments to test the algorithms in practice. We evaluate the algorithms first on synthetically constructed graphs and then using six real-world datasets. The experiments were run on a server equipped with an Intel Xeon Gold 6248 processor.
Data is publicly available, and we share our source code in a public repository online.\!\footnote{\url{https://version.helsinki.fi/dacs/coherent-groups-network}}

We used the following algorithms:

\alggreedy, \algdp, and \algcut from Section~\ref{sec:algo}.

\alglpiter and \alglpsym: These are iterative versions of Algorithm~\ref{alg:rounding} and Algorithm~\ref{alg:rounding2} by combining them with Algorithm~\ref{alg:iter} to update centroids between iterations.
We implement these algorithms using Gurobi Optimizer~\cite{gurobi}.
We exclude \alglpsym from experiments where $\lambda_f \neq \lambda_b$. While the Constraint~\ref{eq:lp_x_ne} is not required to solve \prbgtspalt and does not guarantee non-empty clusters with these algorithms, we include it in our implementations to reduce the number of empty clusters.

\algilpiter: This algorithm iteratively solves the exact integer programming formulation for \prbgtsp (Eq.~\ref{eq:lp_objective}), combining with Algorithm~\ref{alg:iter} to update the centroids.

\algexact: Mixed integer linear programming solver for \prbgts directly, as given in Section~\ref{sec:milp}.

\textbf{Baselines.}
We compare our algorithms with the following baselines.

\algkmeans: $k$-means clustering algorithm iteratively assigning vertices to nearest clusters and updating the centroids to minimize the $L_2$ errors~\cite{lloyd1982least}. Note that $k$-means ignores edges.

\algkcut: This baseline computes an approximate minimum-weight $k$-cut using Gomory--Hu trees to minimize the costs of edges between clusters~\cite{gomory1961multi}. Note that this method does not take $L_2$ errors into account. The vertices are first ordered topologically, and edges are weighted by $\lambda_f$ or $\lambda_b$ based on the vertex ordering. If the graph contains cycles, an approximate minimum-weight feedback arc set of edges is first removed using the greedy algorithm by~\citet{eades1993fast}. Suppose cycles remain due to the feedback arc set not being exact. In that case, the vertices are sorted by the topological order of strongly connected components in the remaining graph. In contrast, the order of vertices within each strongly connected component is arbitrary.

\algrandom: Returns a random partition of the vertices into $k$ clusters.

\textbf{Experiments on synthetic data.}
To test our algorithms, we create two types of synthetic graphs: tree graphs and directed acyclic graphs (DAGs). Each graph consists of $n$ vertices with $d$-dimensional features that separate them into $k$ clusters. For tree graphs, we have the vertices numbered from $1$ to $n$, and for each vertex $v$ except the first, we randomly add an edge from one of the earlier vertices to $v$. For DAGs, we start with the same tree and additionally randomly add an edge between each pair of vertices $v_i$, $v_j$ with $i<j$ with probability $0.05$.

We create the features by first sampling $k$ centroids uniformly distributed from $[0, 1]^{d}$. We then assign $v_{(n/k)(i - 1) + 1}, \ldots, v_{(n/k)i}$ to a cluster $S_i$ with the $i$th centroid. Each node then gets a normally distributed feature centred at the centroid of the cluster it belongs to, with a variance of $0.1$ in each dimension.

\textbf{Runtime scaling experiments.}
To assess the scalability of our algorithms as a function of the size of the inputs, we track the running time of each algorithm as a function of the number of vertices, the number of features, and the number of clusters on synthetic DAGs in Figure~\ref{fig:synthetic_runtime}. We use a time limit of 30 minutes and take the average runtime across 5 runs. For the runtime experiments, we use random initial partitions for the algorithms and use a greedy method for reordering the centroids between iterations~\cite{eades1993fast}.

As default values, we set the number of vertices to $n=200$, the number of clusters to $k=5$, the number of features to $d=5$, and $\lambda_f = \lambda_b = 1.0$. For each runtime experiment, we then vary the number of vertices, features, or clusters accordingly. The number of edges is $O(n^2)$ but with a small coefficient, since edges are added with a probability of $0.05$ for each pair of vertices to turn a random tree graph into a DAG.

\algexact is too slow to be run on graphs larger than around 10 vertices, while \algilpiter exceeds the 30-minute time limit between 160 and 320 vertices. The baselines \algrandom and \algkmeans are the fastest, taking less than one second for graphs with over 5\,000 vertices. \algdp and \alggreedy are also fast, taking around 10-15 seconds. The results agree with the theoretical analysis that \algcut, \alggreedy, and \algdp scale linearly with the size of the input graph.

From Figure~\ref{fig:synthetic_runtime}, we also see that the number of features has less impact on the running time. All algorithms scale in the same manner as they compute the loss of the final partition in the same way. Interestingly, linear programming algorithms seem to run faster with a moderate number of features than with a small number of features.

Finally, when varying the number of clusters $k$, we set the number of vertices as $1000$, since it is not possible to have more clusters than vertices. The runtime of \algdp and \alggreedy grows linearly with the number of clusters, while the runtime of \algcut scales quadratically. \alglpiter slows down significantly, taking more than 30 minutes with $k=40$, whereas the baseline algorithms are barely affected by the number of clusters.

\begin{figure}[t!]
{\includegraphics[]{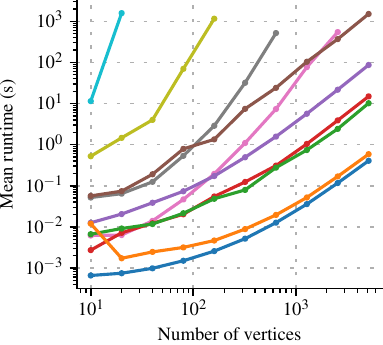}}
{\includegraphics[]{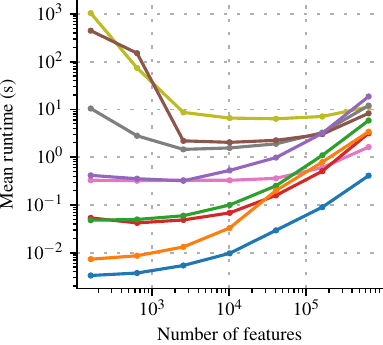}}\\
{\includegraphics[]{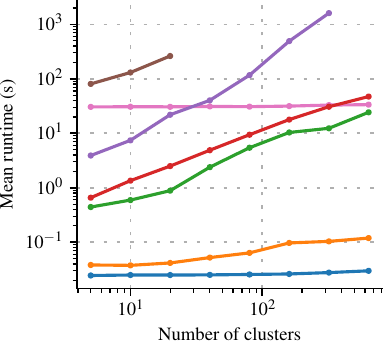}}
\hspace{0.2cm}
\raisebox{2.5cm}{
{\includegraphics[]{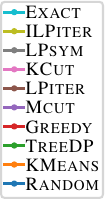}}
}
\caption{Algorithm average runtimes across 5 runs as a function of number of vertices (top left), number of features (top right), and number of clusters (bottom left) on synthetic graphs. Both axes are logarithmic.
}
\label{fig:synthetic_runtime}
\end{figure}

\textbf{Synthetic noise experiments.}
We create synthetic tree graphs and DAGs as before, with $n=1000$, $k=5$, and $d=10$. We define the sets $S_1, \ldots, S_5$ as the ground truth, each containing $n/k = 200$ vertices with features randomly drawn around a centroid. This partition initially minimizes the $L_2$ loss within the sets, and there cannot be backward edges between sets but only forward edges from a set $S_i$ to a later set $S_j$ with $i < j$. By setting $\lambda_f = 0$, the ground truth is then an optimal partition at the beginning with a loss of $0$. We then add random noise by, with a probability $p$, independently reassigning each node a new normally distributed feature around a random centroid.

We compare the similarity between the ground truth partition $\mathcal{S}$ and the partition $\mathcal{S'}$ returned by our algorithms by computing the Adjusted Rand Index, $\mathit{ARI} = \frac{\mathit{RI}-E[\mathit{RI}]}{\max(\mathit{RI}) - E[RI]}$. Here $\mathit{RI}(\mathcal{S}, \mathcal{S'}) = \frac{a+b}{{\binom{n}{2}}}$ is the Rand Index, where $a$ is the number of pairs of elements that are in the same set in both $\mathcal{S}$ and $\mathcal{S'}$, and $b$ is the number of pairs of elements that are in different sets in both $\mathcal{S}$ and $\mathcal{S'}$~\citep{hubert1985comparing}.

Since the \algkmeans algorithm runs very fast, we decide to use the output of \algkmeans as a starting partition to initialize the other algorithms. This reduces the number of iterations the other algorithms need to perform, and can improve the results, especially for \alggreedy, which may get stuck in bad local optima depending on the initial partition.

In Figure~\ref{fig:synthetic_ARI}, we plot $\mathit{ARI}$ between the output partition and the ground truth as a function of the probability $p$ of assigning new features to nodes. We set $\lambda_b = 1000$ and compute the average $\mathit{ARI}$ over 10 randomly generated graphs for each value of $p$. 

On both tree graphs and DAGs, $\mathit{ARI}$ values for \alglpiter, \algcut, \alggreedy, \algdp, and \algkmeans start from around 0.95, indicating that without noise, they are often able to find a partition that is similar to or exactly matches the ground truth. As the features become increasingly more random, $\mathit{ARI}$ for \algkmeans falls to 0, while the partitions by the other algorithms remain somewhat similar to the initial ground truth partition. This kind of partitioning is more suitable for practical applications, where the features for individual nodes are noisy or unreliable, and the network structure is more important.

For tree graphs, \algdp and \alglpiter consistently achieve slightly higher $\mathit{ARI}$ values than other algorithms. For DAGs, \alglpiter clearly outperforms others by finding partitions that closely match the ground truth, even when a large fraction of nodes have unreliable features. Surprisingly, the partitions returned by \algkcut are no more similar to the ground truth than the random partitions.

\begin{figure}[t!]
\centering
\hspace*{0.7cm}
\includegraphics[]{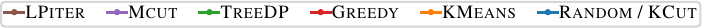}\\
\subcaptionbox{Synthetic trees}{\includegraphics[]{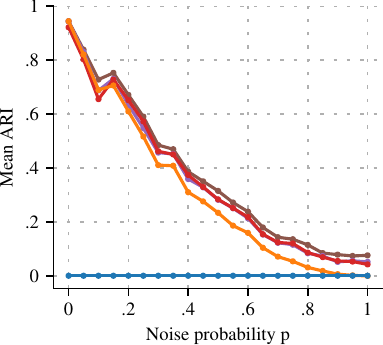}}
\subcaptionbox{Synthetic DAGs}{\includegraphics[]{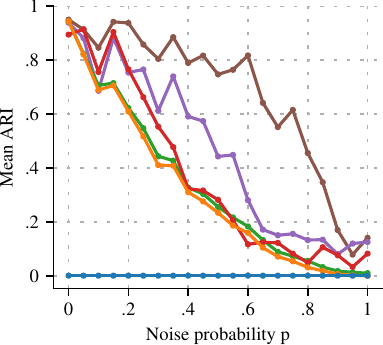}}
\caption{Mean Adjusted Rand Index over 10 runs between the ground truth and the partition chosen by the algorithms as a function of the probability $p$ of reassigning vertices with new random features from a random cluster. On synthetic trees, the values for \alglpiter and \algdp are overlapping.}
\label{fig:synthetic_ARI}
\end{figure}

\textbf{Experiments on real-world data.}
We perform experiments on the following six real-world datasets. The dataset sizes are shown in Table~\ref{tab:dataset_sizes}.

\begin{table}[t!]
\centering
\caption{Dataset sizes: number of vertices, edges, and features per vertex for each dataset.}
\label{tab:dataset_sizes}
\begin{tabular}{lrrr}
\toprule
\textbf{Dataset} & $\abs{V}$ & $\abs{E}$ & $d$ \\
\midrule
\dtenron             & 24\,277 & 28\,105 & 768 \\
\dtdblp              & 30\,581 & 70\,972 & 768 \\
\dtreddit            & 74\,778 & 74\,777 & 768 \\
\dtreddithyperlink   & 27\,863 & 121\,415 & 68 \\
\dttwitter           & 3\,377  & 162 & 768 \\
\dtwiki              & 4\,604  & 119\,882 & 768 \\
\bottomrule
\end{tabular}
\end{table}

\dtenron: We use the Enron Email Dataset\footnote{ \url{https://www.cs.cmu.edu/~./enron/}}, which consists of publicly available emails between employees of the former Enron Corporation. We create a graph where each node represents an email address, and each email creates directed edges from the sender to all recipients. If there are multiple emails between two nodes, we use a weighted edge, where the weight represents the number of emails from one node to another. Here, it would perhaps be more natural to assign features to edges rather than nodes, but since our problem requires features for the nodes, we give features to nodes as follows: We use an embedding model to convert email texts to vectors, and as the feature for each node, we assign the mean of the sentence embedding vectors from all email text contents sent from that email address. We exclude nodes for email addresses that do not send any emails.

\dtdblp: For the \dtdblp,\!\footnote{\url{https://www.aminer.org/citation}} dataset~\citep{Tang:08KDD}, we chose publications from ECMLPKDD, ICDM, KDD, NIPS, SDM, and WWW conferences. For each publication, we add an edge to those publications that cite it. To create the features for each node, we convert the titles of the scientific publications into sentence embeddings that we use as the feature vectors. 

\dtreddit: We use a popular Reddit thread on \emph{/r/politics}, which discusses the 2020 U.S. presidential election between Joe Biden and Donald Trump.\!\footnote{\url{https://www.reddit.com/r/politics/comments/jptq5n/megathread_joe_biden_projected_to_defeat/}} We use the Reddit API to collect the comments in the thread and construct a tree graph where the initial submission and each comment are nodes, and add an edge from each node to the comments responding to it. 

\dtreddithyperlink: This dataset\footnote{\url{https://snap.stanford.edu/data/soc-RedditHyperlinks.html}} represents a network between different communities on Reddit~\cite{kumar2018community}. Each node represents a subreddit, and posts containing a hyperlink to another post in different subreddits create directed edges between the two communities. In case of multiple posts, the edges are weighted by the number of posts linking from one subreddit to another. For each post, we create a 68-dimensional vector using the provided sentiment values and binary LWIC features that encode properties such as keywords and the topic categories of the post. For each node, we then assign a feature that is the mean of the corresponding vectors for each post from that subreddit.

\dttwitter: We used the former Twitter API to collect 10\,000 tweets that were posted by 27th of May, 23:59 UTC with the hashtag \#metoo. We then exclude retweets, and create a network where each node represents a user, and a tweet mentioning other users creates directed edges between the poster and the mentioned users. We convert the tweet texts to embedding vectors, and each user gets a feature that is the mean embedding of the tweets sent by that user. Since most tweets do not mention other users, the resulting graph is sparse and consists of multiple disconnected components.

\dtwiki: As the final dataset, we use the Wikispeedia network,\!\footnote{\url{https://snap.stanford.edu/data/wikispeedia.html}} which is a condensed version of the English language Wikipedia. Each node is a Wikipedia article, and directed edges represent links from one article to another. We create the features for each node from the article text content.

For all datasets except \dtreddithyperlink, we obtain the feature vectors from text by using a pretrained MPNet~\citep{song2020mpnet} language model to convert the text for each node into a $768$-dimensional vector. These embeddings aim to map texts with similar meanings to vectors that have a small distance in the feature space. We chose the all-mpnet-base-v2 language model for creating the feature vectors as it was conveniently accessible and its sentence embeddings achieved the best performance for general-purpose tasks out of the original HuggingFace sentence-transformer models~\cite{reimers2022sbert}. In future applications, this could be replaced by newer state-of-the-art embedding models. However, MPNet suffices for our experiments.

We set the number of clusters to $k=5$ and set $\lambda_f=0.01$ and $\lambda_b=0.1$. We use the output from the fast \algkmeans algorithm as an initial partition for the other algorithms. The results are shown in Table~\ref{tab:results}. The \algexact and \algilpiter algorithms are not included as they are too slow to be run on these datasets within a time limit of 8 hours. Since we set $\lambda_f \neq \lambda_b$ for this experiment, we did not run \alglpsym.

\begin{table}[p!]
\centering
\caption{Algorithm results for each dataset with $\lambda_f = 0.01, \lambda_b = 0.1$ and $k=5$. Each cell shows the mean $\pm$ standard deviation for running time, loss, and iterations across 10 runs. The algorithms with the lowest loss are shown in bold.}
\label{tab:results}
\begin{tabular}{llrrr}
\toprule
\textbf{Dataset} & \textbf{Algorithm} & \textbf{Time (s)} & \textbf{Loss} & \textbf{Iters} \\
\midrule
\multirow{7}{*}{\dtdblp}
  & \textbf{\alggreedy}  & 20 $\pm$ 3.4  & \textbf{24003 $\pm$ 39.7}  & 12 $\pm$ 2.3 \\
  & \algdp  & 6 $\pm$ 0.2  & 24882 $\pm$ 51.4  & 1 $\pm$ 0.0 \\
  & \algcut  & 120 $\pm$ 17.5  & 24169 $\pm$ 45.5  & 1 $\pm$ 0.0 \\
  & \alglpiter  & 2998 $\pm$ 680.2  & 24005 $\pm$ 35.6  & 32 $\pm$ 7.2 \\
  & \algkmeans  & 8 $\pm$ 2.3  & 24963 $\pm$ 80.7  & 89 $\pm$ 28.2 \\
  & \algkcut  & 5475 $\pm$ 292.9  & 25614 $\pm$ 0.0  & 1 $\pm$ 0.0 \\
  & \algrandom  & 0 $\pm$ 0.0  & 28824 $\pm$ 32.0  & 1 $\pm$ 0.0 \\
\midrule
\multirow{7}{*}{\dtenron}
  & \alggreedy  & 13 $\pm$ 3.7  & 17006 $\pm$ 6.5  & 12 $\pm$ 3.4 \\
  & \algdp  & 3 $\pm$ 0.1  & 17567 $\pm$ 31.9  & 1 $\pm$ 0.0 \\
  & \algcut  & 72 $\pm$ 24.9  & 17003 $\pm$ 12.7  & 1 $\pm$ 0.3 \\
  & \textbf{\alglpiter}  & 1810 $\pm$ 997.2  & \textbf{16986 $\pm$ 9.5}  & 42 $\pm$ 24.4 \\
  & \algkmeans  & 5 $\pm$ 0.8  & 17656 $\pm$ 73.7  & 68 $\pm$ 14.6 \\
  & \algkcut  & 2193 $\pm$ 100.3  & 18939 $\pm$ 0.0  & 1 $\pm$ 0.0 \\
  & \algrandom  & 0 $\pm$ 0.0  & 20137 $\pm$ 22.5  & 1 $\pm$ 0.0 \\
\midrule
\multirow{7}{*}{\dtreddit}
  & \alggreedy  & 36 $\pm$ 4.1  & 60089 $\pm$ 0.6  & 10 $\pm$ 1.1 \\
  & \textbf{\algdp}  & 42 $\pm$ 11.9  & \textbf{60069 $\pm$ 0.5}  & 8 $\pm$ 2.3 \\
  & \algcut  & 190 $\pm$ 74.8  & 60089 $\pm$ 6.6  & 1 $\pm$ 0.4 \\
  & \alglpiter  & 6734 $\pm$ 179.4  & 60137 $\pm$ 0.6  & 41 $\pm$ 0.3 \\
  & \algkmeans  & 14 $\pm$ 4.2  & 62332 $\pm$ 935.4  & 63 $\pm$ 23.5 \\
  & \algkcut  & 14764 $\pm$ 690.4  & 65622 $\pm$ 0.0  & 1 $\pm$ 0.0 \\
  & \algrandom  & 1 $\pm$ 0.0  & 69009 $\pm$ 642.2  & 1 $\pm$ 0.0 \\
\midrule
\multirow{7}{*}{\dtreddithyperlink}
  & \textbf{\alggreedy}  & 22 $\pm$ 0.3  & \textbf{2905 $\pm$ 0.0}  & 12 $\pm$ 0.0 \\
  & \algdp  & 28 $\pm$ 0.2  & 7613 $\pm$ 0.0  & 1 $\pm$ 0.0 \\
  & \algcut  & 145 $\pm$ 12.4  & 3459 $\pm$ 0.0  & 1 $\pm$ 0.0 \\
  & \alglpiter  & 1774 $\pm$ 37.9  & 2992 $\pm$ 0.0  & 14 $\pm$ 0.0 \\
  & \algkmeans  & 1 $\pm$ 0.1  & 8872 $\pm$ 933.7  & 33 $\pm$ 7.2 \\
  & \algkcut  & 6621 $\pm$ 726.1  & 8912 $\pm$ 0.0  & 1 $\pm$ 0.0 \\
  & \algrandom  & 0 $\pm$ 0.0  & 20086 $\pm$ 210.1  & 1 $\pm$ 0.0 \\
\midrule
\multirow{7}{*}{\dttwitter}
  & \textbf{\alggreedy}  & 1 $\pm$ 0.1  & \textbf{1620 $\pm$ 4.8}  & 4 $\pm$ 0.9 \\
  & \algdp  & 1 $\pm$ 0.2  & 1621 $\pm$ 4.7  & 3 $\pm$ 0.9 \\
  & \textbf{\algcut}  & 2 $\pm$ 0.2  & \textbf{1620 $\pm$ 4.9}  & 1 $\pm$ 0.0 \\
  & \textbf{\alglpiter}  & 10 $\pm$ 2.7  & \textbf{1620 $\pm$ 4.8}  & 5 $\pm$ 1.3 \\
  & \algkmeans  & 0 $\pm$ 0.1  & 1622 $\pm$ 4.9  & 22 $\pm$ 11.2 \\
  & \algkcut  & 18 $\pm$ 0.2  & 1932 $\pm$ 0.0  & 1 $\pm$ 0.0 \\
  & \algrandom  & 0 $\pm$ 0.0  & 1940 $\pm$ 0.8  & 1 $\pm$ 0.0 \\
\midrule
\multirow{7}{*}{\dtwiki}
  & \textbf{\alggreedy}  & 16 $\pm$ 2.7  & \textbf{4128 $\pm$ 9.9}  & 15 $\pm$ 2.6 \\
  & \algdp  & 7 $\pm$ 0.2  & 7398 $\pm$ 40.5  & 1 $\pm$ 0.0 \\
  & \algcut  & 130 $\pm$ 21.7  & 4134 $\pm$ 30.7  & 2 $\pm$ 0.3 \\
  & \alglpiter  & 462 $\pm$ 160.4  & 4161 $\pm$ 18.4  & 4 $\pm$ 2.3 \\
  & \algkmeans  & 1 $\pm$ 0.1  & 6959 $\pm$ 484.1  & 36 $\pm$ 12.3 \\
  & \algkcut  & 851 $\pm$ 74.1  & 4170 $\pm$ 0.0  & 1 $\pm$ 0.0 \\
  & \algrandom  & 0 $\pm$ 0.0  & 9448 $\pm$ 115.8  & 1 $\pm$ 0.0 \\
\bottomrule
\end{tabular}
\end{table}

Unsurprisingly, \algdp performs best on \dtreddit, which is a tree graph, and worse on graphs that do not resemble trees or forests. \alggreedy, \algcut, and \alglpiter are all competitive algorithms that consistently outperform the baselines. Most notably, \alggreedy achieves the lowest loss on four out of the six datasets while taking less than 40 seconds on all graphs. 

We observe that the algorithms require relatively few iterations to converge to a solution. In particular, \algcut typically converges after one or two iterations, whereas \alglpiter sometimes takes over 40 iterations, which can significantly increase the total running time.

\textbf{Case study: Wikipedia network}
To analyze whether our algorithms can identify sensible partitions, we examine the results on the \dtwiki dataset more closely. We set the number of clusters as $k=10$, $\lambda_f=0.001$, and $\lambda_b = 0.01$, and run the \alggreedy algorithm starting from a random initial partition.

The sizes of the groups in the resulting partition are shown in Table~\ref{tab:wiki_partition}, along with the titles of five articles closest to the centroid of each group. The articles around the centroids represent a wide range of coherent concepts, such as arts, locomotives, birds, and chemical elements.

While minimizing the $L_2$ error, the resulting partition also tries to avoid edges between groups, and orders the groups to restrict the number of backward edges. To this end, around $39\%$ of the articles are placed in the last group such that there are few edges towards earlier groups. Out of the total $119\,882$ edges, $29\%$ were forward edges, $8\%$ were backward edges, with the rest $63\%$ being edges within groups. In comparison, in a partition returned by \algkmeans, the order of the groups is random, and only $44\%$ of the edges are within groups.

\begin{table}[ht!]
\caption{Ordered partition of the \dtwiki dataset into 10 groups by \alggreedy, with $\lambda_f = 0.001, \lambda_b=0.01$. Each row shows the number of Wikipedia articles in the group and the titles of 5 articles in the group that are closest to the centroid.}
\begin{tabular*}{\textwidth}{@{}r r p{0.75\textwidth}@{}}
\toprule
\textbf{Group} & \textbf{Size} & \textbf{Articles} \\
\midrule
1 & 624 & Expressionism, Neoclassicism, Andy\_Warhol, Samuel\_Beckett, Novel \\
2 & 367 & 4-4-0, 4-6-0, Control\_car\_\%28rail\%29, EMD\_GP30, M-10003-6 \\
3 & 375 & History\_of\_Anglo-Saxon\_England, Gallery\_of\_the\_Kings\_and\_Queens\_of\_England, Peterborough\_Chronicle, Oliver\_Cromwell, Roman\_Britain \\
4 & 246 & \raggedright Natural\_disaster, Flood, Tropical\_cyclone, Meteorological\_history\_of\_Hurricane\_Katrina, Tsunami \\
5 & 280 & Bird, Blackbird, Coot, Sparrowhawk, Sparrow \\
6 & 220 & Sesame, Vegetable, Cultivar, Seed, Citrus \\
7 & 305 & \raggedright Astronomy, 3\_Juno, Astrophysics\_Data\_System, Physical\_science, Solar\_System \\
8 & 194 & Dinosaur, Reptile, Pelycosaur, Plesiosaur, Sauropsid \\
9 & 205 & Thorium, Tellurium, Praseodymium, Technetium, Americium \\
10 & 1788 & List\_of\_countries, 17th\_century, 1st\_century, Romania, Turkey \\
\bottomrule
\end{tabular*}
\label{tab:wiki_partition}
\end{table}
\section{Concluding remarks}\label{sec:conclusions}

In this paper, we considered partitioning the nodes of a directed graph into an ordered set of coherent groups. The objective was to minimize $L_2$ loss within groups and avoid crossing edges between groups, especially edges from nodes in later groups to earlier groups. 

We formulated a mixed-integer linear program that can solve our problem in exponential time. To find solutions in polynomial time, we considered heuristic approaches that utilize an iterative algorithm, alternating between fixing the centroids and optimizing the partition, and vice versa. We showed that finding a partition is an \np-hard problem, but that we can find the partition exactly for tree graphs, or when the number of groups is $k=2$.

For the general case, we created versions of the iterative approach based on greedy local search, spanning forests, and iteratively optimizing the partition for two groups at a time, as well as a linear programming rounding algorithm that finds a $k-1$-approximation for the subproblem.

We performed experiments on both synthetic and real-world datasets to demonstrate that the algorithms are practical, outperform baselines, and find coherent node groups in a small number of iterations and linear running time.

While this paper applied the $L_2$ loss function to measure the distance between two real-valued feature vectors, our methods could be used with other types of distance measures. In particular, networks with categorical features for the nodes provide an interesting line for future work. Another direction for future work would be to consider graphs where the feature vectors are assigned to the edges rather than the nodes. Alternatively, better initialization methods for our algorithms could be considered.

\backmatter

\bibliography{references}

\end{document}